\newcommand{\todo}[1]{[To do: #1]}
\newcommand{\hide}[1]{}
\theoremstyle{plain}
\newtheorem{theorem}{Theorem}[section]
\newtheorem{lemma}[theorem]{Lemma}
\newtheorem{proposition}[theorem]{Proposition}
\theoremstyle{definition}
\newtheorem{definition}[theorem]{Definition}
\newtheorem{assumption}[theorem]{Assumption}
\newcommand*{\R}{\ensuremath{\mathbb{R}}}
\newcommand*{\Rpz}{\ensuremath{\mathbb{R}_{{\geq}0}}}
\newcommand*{\N}{\ensuremath{\mathbb{N}}}
\newcommand*{\exps}[1]{e^{#1}}
\newcommand*{\norDS}{\ensuremath{\mathit{gauss}}}
\newcommand*{\betaDS}{\ensuremath{\mathit{beta}}}
\newcommand*{\diracDS}{\ensuremath{\mathit{dirac}}}
\newcommand*{\berDS}{\ensuremath{\mathit{bern}}}
\newcommand*{\expDS}{\ensuremath{\mathit{exp}}}
\DeclareMathOperator*{\identity}{id}
\newcommand{\inv}{^{\textrm-1}}
\newcommand{\op}{^{\mathrm{op}}}
\newcommand*{\defeq}{\ensuremath{\,\smash{\stackrel{\text{\tiny def}}=}\,}}
\newcommand*{\SA}[1]{\Sigma_{#1}}
\newcommand*{\PM}{P} 
\newcommand*{\Monad}{T} 
\newcommand*{\ipjname}{\mathsf{d}}
\newcommand*{\icjname}{\mathsf{p}}
\newcommand*{\ixjname}{\mathsf{z}}
\newcommand*{\ipj}[3]{#1\mathrel{\vdash\!\!\!\!_\ipjname}#2 \colon #3}
\newcommand*{\icj}[3]{#1\mathrel{\vdash\!\!\!\!_\icjname}#2 \colon #3}
\newcommand*{\ixj}[3]{#1\mathrel{\vdash\!\!\!\!_\ixjname}#2 \colon #3}
\newcommand*{\vdashipj}{\mathrel{\vdash\!\!\!\!_\ipjname}}
\newcommand*{\vdashicj}{\mathrel{\vdash\!\!\!\!_\icjname}}
\newcommand*{\inferback}[2]{\infer{#2}{#1}}
\newcommand*{\inferbackusing}[3]{\infer[#3]{#2}{#1}}
\newcommand*{\db}[1]{\ensuremath{\llbracket #1 \rrbracket}}
\newcommand*{\dho}[1]{\ensuremath{\llparenthesis #1\rrparenthesis}}
\newcommand*{\inorm}{\ensuremath{\mathsf{norm}}}
\newcommand*{\itrue}{\ensuremath{\mathsf{true}}}
\newcommand*{\ifalse}{\ensuremath{\mathsf{false}}}
\newcommand*{\ilet}{\ensuremath{\mathsf{let}}}
\newcommand*{\iin}{\ensuremath{\mathsf{in}}}
\newcommand*{\icase}{\ensuremath{\mathsf{case}}}
\newcommand*{\iif}{\ensuremath{\mathsf{if}}}
\newcommand*{\ithen}{\ensuremath{\mathsf{then}}}
\newcommand*{\ielse}{\ensuremath{\mathsf{else}}}
\newcommand*{\iof}{\ensuremath{\mathsf{of}}}
\newcommand*{\ireturn}{\ensuremath{\mathsf{return}}}
\newcommand*{\ithunk}{\ensuremath{\mathsf{thunk}}}
\newcommand*{\iforce}{\ensuremath{\mathsf{force}}}
\newcommand*{\isample}{\ensuremath{\mathsf{sample}}}
\newcommand*{\iscore}{\ensuremath{\mathsf{score}}}
\newcommand*{\iboolty}{\ensuremath{\mathsf{bool}}}
\newcommand*{\density}{\ensuremath{\mathsf{D}}}
\newcommand*{\ctx}[1]{\mathcal C[#1]}
\newcommand*{\ectxc}[1]{\mathcal D[#1]}
\newcommand*{\ectxp}[1]{\mathcal E[#1]}
\newcommand*{\conf}[2]{\langle #1,#2\rangle}
\newcommand*{\ConZ}[1]{\mathrm{Con}_{\ixjname}(#1)}
\newcommand*{\ConZV}[1]{\mathrm{ConV}_{\ixjname}(#1)}
\newcommand*{\ConZN}[1]{\mathrm{ConN}_{\ixjname}(#1)}
\newcommand*{\ConP}[1]{\mathrm{Con}_{\ipjname}(#1)}
\newcommand*{\ConPV}[1]{\mathrm{ConV}_{\ipjname}(#1)}
\newcommand*{\ConPN}[1]{\mathrm{ConN}_{\ipjname}(#1)}
\newcommand*{\ConC}[1]{\mathrm{Con}_{\icjname}(#1)}
\newcommand*{\ConCV}[1]{\mathrm{ConV}_{\icjname}(#1)}
\newcommand*{\ConCN}[1]{\mathrm{ConN}_{\icjname}(#1)}
\newcommand*{\RedCI}[2]{\Prob{#1\longrightarrow (#2)}}
\newcommand*{\bigRedCI}[2]{\bigProb{#1\longrightarrow (#2)}}
\newcommand*{\RedC}[2]{\Prob{#1\longrightarrow #2}}
\newcommand*{\PrEval}[2]{\Prob{#1\Downarrow #2}}
\newcommand*{\PrEvaln}[3]{\Prob{#2\Downarrow^{#1} #3}}
\newcommand*{\RedCC}[3]{\Prob{#1\longrightarrow_{#2}#3}}
\newcommand*{\bigRedCC}[3]{\bigProb{#1\longrightarrow_{#2}#3}}
\newcommand*{\Prob}[1]{\mathrm{Pr}(#1)}
\newcommand*{\bigProb}[1]{\mathrm{Pr}\big(#1\big)}
\newcommand*{\Score}{\mathrm{Sc}}
\newcommand*{\indicator}[1]{\left[{#1}\right]}
\newcommand*{\bA}{\ensuremath{\mathbb{A}}}
\newcommand*{\bB}{\ensuremath{\mathbb{B}}}
\newcommand*{\bD}{\ensuremath{\mathbb{D}}}
\newcommand*{\iPM}{\mathsf{P}}
\newcommand*{\iMonad}{\mathsf{T}}
\newcommand{\D}{\ensuremath{\mathrm{d}}}
\newcommand{\ie}{\textit{i.e.}\xspace}
\newcommand{\eg}{\textit{e.g.}\xspace}
\newcommand{\Meas}{\ensuremath{\mathbf{Meas}}}
\newcommand{\Set}{\ensuremath{\mathbf{Set}}}
\begin{document}

\setlength{\pdfpageheight}{\paperheight}
\setlength{\pdfpagewidth}{\paperwidth}

\conferenceinfo{CONF 'yy}{Month d--d, 20yy, City, ST, Country}
\copyrightyear{20yy}
\copyrightdata{978-1-nnnn-nnnn-n/yy/mm}
\copyrightdoi{nnnnnnn.nnnnnnn}



\title{Semantics for probabilistic programming: higher-order functions, continuous distributions,
and soft constraints}

\authorinfo{Sam Staton\and Hongseok Yang \and Frank Wood} 
{University of Oxford}
{}
\authorinfo{Chris Heunen} 
{University of Edinburgh}
{}
\authorinfo{Ohad Kammar}
{University of Cambridge}
{}

\maketitle

\begin{abstract}

We study the semantic foundation of expressive probabilistic programming languages, that support higher-order functions, continuous distributions, and soft constraints 
(such as Anglican, Church, and Venture).
%
We define a metalanguage (an idealised version of Anglican) for probabilistic computation with the above features, 
develop both operational and denotational semantics,
and prove soundness, adequacy, and termination.
This involves measure theory, stochastic labelled transition systems, and functor categories, but admits intuitive computational readings, one of which views sampled random variables as dynamically allocated read-only variables. We apply our semantics 
to validate nontrivial equations underlying the correctness of certain compiler optimisations and inference algorithms such as sequential Monte Carlo simulation. 
The language enables defining probability distributions on higher-order functions, and we study their properties.


\end{abstract}



\section{Introduction}
\label{sec:intro}

Probabilistic programming is the idea to use programs to specify probabilistic models;
probabilistic programming languages blend programming constructs with probabilistic primitives.
This helps scientists express complicated models succinctly.
Moreover, such languages come with generic inference algorithms,
relieving the programmers of the nontrivial task of (algorithmically) answering queries about their probabilistic models.
This is useful in \eg~machine learning.

Several higher-order probabilistic programming languages have recently attracted a substantial user base.
Some languages (such as Infer.net~\cite{minka_software_2010}, PyMC~\cite{patilhuardfonnesbeck:pymc}, and Stan~\cite{stan_software_2014}) 
are less expressive but provide powerful inference algorithms, 
while others (such as Anglican~\cite{wood-aistats-2014}, Church~\cite{goodman_uai_2008}, 
and Venture~\cite{Mansinghka-venture14}) have less efficient inference algorithms but more expressive power.
We consider the more expressive languages, that support higher-order functions, continuous distributions, and soft constraints.
More precisely, we consider a programming language (\S\ref{sec:fo-lang}) with higher-order functions (\S\ref{sec:ho-lang}) as well as the following probabilistic primitives.
\begin{description}
  \item[Sampling] The command $\isample(t)$ draws a sample from a distribution described by $t$, which may range over the real numbers.
  \item[Soft constraints] The command $\iscore(t)$ puts a score $t$ (a positive real number) on the current execution trace. This is typically used to record that some particular datum was observed as being drawn from a particular distribution; the score describes how surprising the observation is.
  \item[Normalization] The command $\inorm(u)$ runs a simulation algorithm over the program fragment $u$. This takes the scores into account and returns a new, normalized probability distribution.
  The argument to $\isample$ might be a primitive distribution, or a distribution defined by normalizing another program.
  This is called a \emph{nested query}, by analogy with database programming. 
\end{description}

Here is a simple example of a program. We write 
$\norDS(\mu,\sigma)$ for the Gaussian probability distribution whose density function is
$\mathit{density}{\_}\norDS(a,(\mu,\sigma))=\frac 1{\sigma\sqrt {2\pi}}\exp(-\frac{(a-\mu)^2}{2\sigma^2})$. 
\begin{equation}\label{eqn:example}
\begin{array}{ll}
1&\inorm(\\
2&\quad \ilet\,x=\isample(\norDS(0.0,3.0))\,\iin\\
3&\quad \iscore(\mathit{density}{\_}\norDS (5.0,(x,1.0));\\
4&\quad \ireturn(x<4.5))
\end{array}\end{equation}
Line~2 samples~$x$ from a prior Gaussian distribution. The soft constraint on Line~3 expresses the likelihood of 
the observed data,~$5.0$, coming from a Gaussian given the prior~$x$. Line~4 says that what we are actually interested in 
is a boolean random variable over the sample space.
Line~1 calculates a posterior distribution for the return value, using the prior and the likelihood. In this example we can precisely calculate that the posterior distribution on $\{\itrue,\ifalse\}$ has $p(\itrue)=0.5$.

Languages like this currently lack formal exact semantics.
The aim of this paper is to provide just such a foundation as a basis for formal reasoning.
Most expressive probabilistic programming languages are explained in terms of their Monte Carlo simulation algorithms. The simplest such algorithm, using importance and rejection sampling, is the \textit{de facto} semantics against which other algorithms are `proved approximately correct'. Such `semantics' are hard to handle and extend.
\hide{SS: I think I've covered these points below.
  \item It validates nontrivial program transformations, such as those involving nested queries.
  For example, Subsection~\ref{subsec:montecarlo} below proves correctness of sequential Monte Carlo simulation.
  \item It resolves a variation of the Church-Turing thesis: despite its higher-order features, can the language only express measurable functions between real numbers? Theorem~\ref{thm:churchturing} below proves that this conjecture (raised in~\cite{parkpfenningthrun:sampling}) indeed holds.
  \item It aids language design by identifying useful constructs, concepts, and typing disciplines.
  Various advanced nonparametric Bayesian models seem to have structural explanations.
  Studying their semantic interpretations brings out the commonalities, as illustrated in Section~\ref{sec:ho-opsem} below.
\end{itemize}}

We provide two styles of semantics, operational and denotational.
For first-order probabilistic programs, the denotational semantics is straightforward:
types are interpreted as measurable spaces, and terms are interpreted as measurable 
functions~(\S\ref{sec:fo-densem}). Operational semantics is more complicated. For discrete distributions, an 
operational semantics might be a probabilistic transition system, but for 
continuous distributions, it must be a stochastic relation (labelled Markov process).
We resolve this by equipping the set of configurations with the structure of a measurable space~(\S\ref{sec:fo-opsem}).

The advantage to the operational semantics is that it is easily extended to higher-order programs~(\S\ref{sec:ho-opsem}).
Denotational semantics for higher-order programs poses a problem, because 
measurable spaces do not support the usual $\beta/\eta$ theory of functions:
they do not form a Cartesian closed category (indeed, $\R^\R$ does not exist as a measurable space~\cite{aumann:functionspaces}).
Earlier work dealt with this either by excluding higher-order functions or by considering only discrete distributions.
We resolve this by moving from the category of measurable spaces, where standard probability theory takes place, to a functor category based on it~(\S\ref{sec:ho-densem}). 
The former embeds in the latter, so we can still interpret first-order concepts. But the functor category does have well-behaved function spaces, so we can also interpret higher-order concepts. Moreover, by lifting the monad of probability distributions~\cite{giry:monad} to the functor category, we can also interpret continuous distributions. 
Finally, we can interpret observations by considering probability distributions with continuous density, irrespective of the categorical machinery~(\S\ref{sec:density}).

The denotational semantics is sound and adequate with respect to the operational semantics (\S\ref{sec:fo-opsem-sound},\ref{sec:ho-sound}), which means one can use the denotational model to directly check program equations while respecting computational issues.
For example:
\begin{itemize}
  \item we demonstrate a key program equation for sequential Monte Carlo simulation~(\S\ref{subsec:montecarlo});
  \item we show that every term of first-order type is equal to one without $\lambda$-abstractions or application, and hence is interpreted as a measurable function~(Proposition~\ref{prop:church-turing}).
\end{itemize}

\section{Preliminaries}
\label{sec:prelim}

We recall basic definitions and facts of measure theory.

\begin{definition}
  A \emph{$\sigma$-algebra} on a set $X$ is a family $\Sigma$ of subsets of $X$, called \emph{measurable (sub)sets}, which contains $X$ and is closed under complements and countable unions.
  A \emph{measurable space} is a set with a $\sigma$-algebra.

  A \emph{measure} on a measurable space $(X,\Sigma)$ is a function $p \colon \Sigma\to [0,\infty]$ such that 
$p(\emptyset)=0$ and 
  $p(\bigcup_{i\in \N}U_i)=\sum_{i\in\N} p(U_i)$ for each sequence $U_1,U_2,\ldots$ of disjoint measurable sets.
  A \emph{probability measure} or \emph{probability distribution} is a measure $p$ with $p(X)=1$. 
\end{definition}

In the paper we use a few important constructions for measurable spaces.
The first is to make a set $X$ into a measurable space by taking the full powerset of $X$ as $\Sigma$, 
yielding a \emph{discrete} measurable space. When $X$ is countable, a probability distribution on $(X,\Sigma)$ is determined 
by its values on singleton sets, that is, by specifying a function $p \colon X\to [0,1]$ such that $\sum_{x\in X}p(x)=1$. 

The second construction is to combine a collection of measurable spaces $(X_i,\Sigma_i)_{i \in I}$ by \emph{sum} or \emph{product}. The underlying sets are the disjoint union $\sum_{i \in I} X_i$ and product $\prod_{i \in I} X_i$ of sets.
The measurable sets in the sum are $\sum_{i \in I} U_i$ for $U_i \in \Sigma_i$.
The $\sigma$-algebra of the product is the smallest one containing all the subsets $\prod_{i \in I} U_i$ where $U_i \in \Sigma_i$ equals $X_i$ but for a single index $i$.

For the third, the real numbers form a measurable space $(\R,\Sigma_\R)$ under the smallest $\sigma$-algebra that contains the open intervals; the measurable sets are called \emph{Borel} sets. Restricting to any measurable subset gives a new measurable space, such as the space $\Rpz$ of nonnegative reals and the unit interval $[0,1]$.

The fourth construction is to make the set $\PM(X)$ of all probability measures on a measurable space $(X,\Sigma_X)$ into a measurable space, by letting $\Sigma_{\PM(X)}$ be the smallest $\sigma$-algebra containing the sets $\{p\in\PM(X) \mid p(U)\in V\}$ for all 
$U\in \Sigma_X$ and $V\in \Sigma_{[0,1]}$.  

\begin{definition}
  Let $(X,\Sigma_X)$, $(Y,\Sigma_Y)$ be measurable spaces. 
  A function $f\colon X\to Y$ is \emph{measurable} if $f\inv(U)\in\Sigma_X$ for $U \in \Sigma_Y$. 
\end{definition}

We can \emph{push forward} a measure along a measurable function:
if $p \colon \Sigma_X\to [0,1]$ is a probability measure on $(X,\Sigma_X)$ and $f\colon X\to Y$ is a measurable function, then 
$q(U)=p(f\inv(U))$ is a probability measure on $(Y,\Sigma_Y)$. 

\begin{definition}\label{def:srel}
  A \emph{stochastic relation} between measurable spaces $(X,\Sigma_X)$ and $(Y,\Sigma_Y)$ is a function $r \colon X\times \Sigma_Y\to [0,1]$ such that $r(x,-) \colon \Sigma_Y\to[0,1]$ is a probability distribution for all $x\in X$,
  and $r(-,V) \colon X\to[0,1]$ is measurable for all $V\in\Sigma_Y$.
\end{definition}

Giving a stochastic relation from $(X,\Sigma_X)$ to $(Y,\Sigma_Y)$ is equivalent to giving a measurable function $(X,\Sigma_X) \to (\PM(Y),\Sigma_{\PM(Y)})$.
Stochastic relations $r \colon X\times \Sigma_Y\to [0,1]$ and $s\colon Y\times \Sigma_Z\to [0,1]$ compose associatively to $(s\circ r)\colon X\times \Sigma_Z\to [0,1]$ via the formula 
\[
  (s\circ r)(x,W)=\int_Y s(y,W)\ r(x,\D y).
\]

Finally, for a predicate $\varphi$, we use the indicator expression $\indicator{\varphi}$ to denote $1$ if $\varphi$ holds, and $0$ otherwise.


\section{A first-order language}
\label{sec:fo-lang}

This section presents a first-order language for expressing Bayesian probabilistic models. The language forms a first-order core of a higher-order extension in Section~\ref{sec:ho-lang}, and provides a simpler setting to illustrate key ideas. 
The language includes infinitary type and term constructors, constant terms for all measurable functions between  measurable spaces, 
and constructs for specifying Bayesian probabilistic models, namely, operations for sampling distributions, scoring samples, and normalizing 
distributions based on scores.
This highly permissive and slightly unusual syntax is not meant to be a useful programming language itself.
Rather, its purpose is to serve as a semantic metalanguage to which a practical programming language compiles, and to provide a mathematical setting for studying high-level constructs for probabilistic computation.

\paragraph{Types} 
The language has types
\[
\bA,\bB \;::=\; 
\R
~|~ \iPM(\bA)
~|~ 1
~|~ \bA \times \bB
~|~ \sum_{i\in I}\bA_i 
\]
where $I$ ranges over countable sets. 
A type $\bA$ stands for a measurable space~$\db{\bA}$. For example, $\R$ denotes the measurable space of reals, $\iPM(\bA)$ is the space of probability measures on $\bA$, and $1$ is the (discrete) measurable space on the singleton set. 
The other type constructors correspond to products and sums of measurable spaces. Notice that countable sums are allowed, enabling us to express usual ground types in programming languages via standard encoding. For instance, the type for booleans is $1+1$, 
and that for natural numbers $\sum_{i\in \N}1$.

\paragraph{Terms}

We distinguish typing judgements: $\ipj\Gamma t \bA$ for deterministic terms, and $\icj\Gamma t \bA$ for probabilistic terms
(see also e.g.~\cite{lpt-cbv,parkpfenningthrun:sampling,ramseypfeffer:stochasticlambda}).
In both, $\bA$ is a type, and $\Gamma$ is a list of variable/type pairs. 
Variables stand for deterministic terms.

Intuitively, a probabilistic term $\icj\Gamma t \bA$ may have two
kinds of effects: during evaluation, $t$ may sample from a
probability distribution, and it may score the current execution trace
(typically according to likelihood of data). Evaluating a
deterministic term $\ipj\Gamma t \bA$ does not have any effects.

\paragraph{Sums and products}

The language includes variables, and standard constructors and destructors for sum and product types.
\[
  \begin{array}{c}
    \infer{\ipj{\Gamma,x \colon \bA,\Gamma'}x\bA}{}
    \qquad
    \inferbackusing{
      \ipj \Gamma t {\bA_i}
    }{
      \ipj \Gamma {(i,t)} {\sum_{i\in I}\bA_i}
    }{
    }
    \\
    \\
    \inferbackusing{
      \ipj\Gamma t {\sum_{i\in I}\bA_i}
      \quad 
      \left(\ixj {\Gamma,x\colon \bA_i} {u_i} {\bB}\right)_{i\in I}
    }{
      \ixj \Gamma {\icase\ t\ \iof\ \{(i,x)\Rightarrow u_i\}_{i \in I}} {\bB}
    }{
      (\ixjname\in\{\ipjname,\icjname\})
    }
    \\
    \\
    \inferbackusing{
    }{
        \ipj \Gamma {*} {1}
    }{
    }
    \quad
    \inferbackusing{
      \ipj{\Gamma}{t_j}{\bA_j} \ \text{for all $j \in \{0,1\}$}
    }{
      \ipj{\Gamma}{(t_0,t_1)}{\bA_0 \times \bA_1}
    }{
    }
    \quad
    \inferbackusing{
      \ipj \Gamma t {\bA_0 \times \bA_1} 
    }{
      \ipj \Gamma {\pi_j(t)} {\bA_j}
    }{
    }
  \end{array}
\]
In the rules for sums, $I$ may be infinite.
In the last rule, $j$ is $0$ or $1$. 
We use some standard syntactic sugar, such as $\ifalse$ and $\itrue$ for the injections in the type $\iboolty = 1+1$, and 
 $\iif$ for $\icase$ in that instance.

\paragraph{Sequencing}

We include the standard constructs (e.g.~\cite{lpt-cbv,moggi-monads}).
\[
  \begin{array}{c}
    \inferback{
     \ipj\Gamma t \bA
    }{
     \icj \Gamma {\ireturn(t)} {\bA}
    }
    \qquad\qquad
    \inferback{
      \icj\Gamma t \bA
      \quad\icj {\Gamma,x \colon \bA} u \bB
    }{
     \icj \Gamma {\ilet\ x=t\ \iin\ u} {\bB}
    }
  \end{array}
\]
Where $\bA=1$, we write $(t;u)$ for $\ilet\,x=t\,\iin\,u$.
\paragraph{Language-specific constructs}

The language has constant terms for all measurable functions.
\begin{equation}\label{eq:functionterms}
  \begin{aligned}
        \infer[\text{($f \colon \db{\bA} \to \db{\bB}$ measurable)}]{
                \ipj\Gamma {f(t)}{\bB}
        }{ 
                \ipj\Gamma t {\bA} 
        }
  \end{aligned}
\end{equation}
In particular, all the usual distributions are in the language, including the Dirac distribution $\diracDS(x)$ concentrated on outcome $x$, the Gaussian distribution $\norDS(\mu,\sigma)$ with mean $\mu$ and standard deviation $\sigma$, the Bernoulli distribution $\berDS(p)$ with success probability $p$, the exponential distribution $\expDS(r)$ with rate $r$, 
and the Beta distribution $\betaDS(\alpha,\beta)$ with parameters
$\alpha,\beta$.\footnote{Usually, $\norDS(0.0,0.0)$ is 
 undefined, but we just let $\norDS(0.0,0.0)=\norDS(0.0,1.0)$, and so on,
 to avoid worrying about this sort of error.}
For example, from the measurable functions $42.0 \colon 1\to\R$, $\exps{(-)} \colon \R\to \R$, $\norDS \colon \R\times \R\to\PM(\R)$ and ${<}\colon \R\times \R\to 1+1$ we can derive: 
\[
        \infer{
                \ipj\Gamma {42.0}{\R}
        }{ 
        }
\qquad
        \infer{
                \ipj\Gamma {\exps{t}}{\R}
        }{ 
                \ipj\Gamma {t} {\R} 
        }
\]
\[
        \infer{
                \ipj\Gamma {\norDS(\mu,\sigma)}{\iPM(\R)}
        }{ 
                \ipj\Gamma {\mu} {\R} 
                \quad
                \ipj\Gamma {\sigma} {\R} 
        }
\qquad
        \infer{
                \ipj\Gamma {t<u}{\iboolty}
        }{ 
                \ipj\Gamma t {\R} 
                \quad
                \ipj\Gamma u{\R} 
        }
\]

The following terms form the core of our language.
\[
        \infer{ 
                \icj\Gamma {\isample(t)} \bA
        }{
                \ipj\Gamma t {\iPM(\bA)}
        }
        \qquad\qquad
        \infer{ 
                \icj\Gamma {\iscore(t)} 1
        }{ 
                \ipj\Gamma t \R
        }
\]
The first term samples a value from a distribution $t$. The second multiplies the score of the current trace 
with $t$.
Since both of these terms express effects, they are typed under ${\vdashicj}$ instead 
of ${\vdashipj}$. 

The reader may think of the score of the current execution trace as a state, but it cannot be read, nor changed arbitrarily: 
it can only be multiplied by another score.
The argument $t$ in $\iscore(t)$ is usually the density of a
probability distribution at an observed data point. For instance, recall
the example~\eqref{eqn:example} from the Introduction:
\[\begin{array}{l@{}l}\inorm(&
\ilet\,x=\isample(\norDS(0.0,3.0))\,\iin\,\\&
\iscore(\mathit{density}{\_}\norDS (5.0,(x,1.0));\ireturn(x<4.5))
\end{array}\] An 
execution trace is scored higher 
when~$x$ is closer to the datum~$5$. 

When the argument $t$ in $\iscore(t)$ is $0$, this is called a 
\emph{hard constraint}, meaning `reject this trace'; otherwise it is
called a soft constraint. In the discrete setting, hard constraints are
more-or-less sufficient, but even then, soft constraints tend to be more efficient.

\paragraph{Normalization} 

Two representative tasks of Bayesian inference are to calculate the
so-called \emph{posterior distribution} and \emph{model
  evidence} from a prior distribution and likelihood.
Programs built from $\isample$ and $\iscore$ can be thought of as
setting up a prior and a likelihood. Consider the following program:
\[
        \begin{array}{@{}l@{}} 
                \ilet\ x=\isample(\berDS(0.25))\ \iin\ 
                \\
          \ilet\ y=(\iif\ x\ \ithen \ \ireturn\ 5.0\ \ielse\ \ireturn\
          2.0)\ \iin
               \\ \iscore(\mathit{density}\_\expDS(0.0,y));
                \ireturn(x)
        \end{array}
\]

Here the prior $y$ comes from the Bernoulli distribution, and the
likelihood concerns datum $0.0$ coming from an exponential
distribution with rate $y$.
Recall that $\mathit{density}\_\expDS(0.0,y)=y$.
So there are two execution traces, returning either $\itrue$, with probability $0.25$ and 
score 
$5.0$,
or $\ifalse$, with probability $0.75$ and score $2.0$. 

The  product of the prior and likelihood gives an unnormalized
posterior distribution on the return value:
$({\itrue\mapsto 0.25{\cdot} 5{=}1.25},$ ${\ifalse\mapsto 0.75{\cdot} 2{=}1.5})$. 
The normalizing constant is  the average score: 
$(0.25 {\cdot} 5 + 0.75 {\cdot} 2) = 2.75$, so the posterior is
$(\itrue\mapsto \frac{1.25}{2.75}{\approx}0.45,\ifalse\mapsto
\frac{1.5}{2.75}{\approx} 0.55)$. 
The average score is called the model evidence. It is a measure of how well the model encoded by the program matches the observation. 
%

Note that the sample $x=\itrue$ matches the datum better, so 
the probability of $\itrue$ goes up from $0.25$ to $0.45$ in the posterior.

In our language we have a term $\inorm(t)$ that will usually convert a probabilistic
term $t$ into a deterministic value, which is its posterior
distribution together with the model evidence. 
\[
        \infer{ 
                \ipj \Gamma {\inorm(t)} {(\R \times \iPM(\bA))\ +1+1}
        }{ 
                \icj \Gamma t {\bA}
        }
\]
If the model evidence is $0$ or $\infty$, the conversion fails, and
this is tracked by the `${+}1{+}1$'. 


\section{Denotational semantics}
\label{sec:fo-densem}
This section discusses the natural denotational semantics of the first-order language.
The basic idea can be traced back a long way (\eg \cite{kozen:probablistic}) 
but our treatment of $\iscore$ and $\inorm$ appear to be novel.
 As described, types $\bA$ are interpreted as
measurable spaces $\db{\bA}$. A 
context $\Gamma=(x_1 \colon \bA_1,\ldots,x_n \colon \bA_n)$ is
interpreted as the measurable space $\db\Gamma \defeq
\prod_{i=1}^n\db{\bA_i}$ of its valuations. 

\begin{itemize}
  \item Deterministic terms $\ipj\Gamma t \bA$ are interpreted as measurable functions $\db t \colon \db\Gamma\to\db\bA$, providing a result for each valuation of the context.
  \item Probabilistic terms $\icj\Gamma t \bA$ are interpreted as
    measurable functions $\db t \colon \db\Gamma\to\PM(\Rpz\times
    \db\bA)$, providing a probability measure on (score,result) pairs 
    for each valuation of the context. 
\end{itemize}
    Informally, if 
    $(\Omega,p\colon\Omega{\to}[0,1])$ is the prior sample space of the
    program (specified by $\isample$), 
    and $l\colon\Omega{\to}\Rpz$ is the likelihood (specified by $\iscore$),
    and $r\colon\Omega\to\db\bA$ is the return value, 
    then a measure in $\PM(\Rpz\times\db\bA)$ is found by 
    pushing forward $p$ along~$(l,r)$.
 

\paragraph{Sums and products}

The interpretation of deterministic terms follows the usual pattern of the internal language of a distributive category (\eg \cite{pitts-catlogic}). 
For instance, $\db{\ipj{\Gamma,x\colon \bA,\Gamma'}x \bA}(\gamma,a,\gamma')\defeq a$, 
and $\db{\ipj{\Gamma}{f(t)}\bA}(\gamma)\defeq f(\db t(\gamma))$ for measurable $f\colon\db \bA\to\db \bB$.
This interpretation is actually the same as the usual set-theoretic semantics of the calculus, as one can show by induction that the induced functions $\db\Gamma\to\db A$ are measurable. 

\paragraph{Sequencing}

For probabilistic terms, we proceed as follows. 
\[
  \db{\ireturn(t)}(\gamma)(U)\defeq [(1,\db t(\gamma))\in U],
\]
which denotes a Dirac distribution, and $\db{\ilet\,x=t\,\iin\,u}(\gamma)(V)$ is
\[
{        \int_{\Rpz \times \db\bA} 
        \Bigl(\db u(\gamma,x)\big(\big\{(s,b)\,\big|\, (r\cdot s,b)\,{\in}\,V\big\}\big)\Bigr)\,\Bigl(\db{t}(\gamma)(\D(r,x))\Bigr).}
\]
As we will explain shortly, these interpretations come from treating $P(\Rpz\times(-))$ as a commutative monad, which essentially means 
the following program equations hold.
\begin{align*}
&\db{\ilet\, x=\ireturn(x)\,\iin\,u}  =\db u \qquad
  \db{\ilet\, x=t\,\iin \,\ireturn(x)}  =\db t \\
&\begin{aligned}
  \db{\ilet\, y=(\ilet\,x=t\,\iin\,u) \,\iin\,v} & = \db{\ilet\, x=t\,\iin\,\ilet\,y=u \,\iin\,v} \\
  \db{\ilet\,x=t\,\iin\,\ilet\,y=u\,\iin\,(x,y)} & = \db{\ilet\,y=u\,\iin\,\ilet\,x=t\,\iin\,(x,y)}
\end{aligned}\end{align*}
The last equation justifies a useful program optimisation technique~\cite[\S5.5]{wood-aistats-2014}.
(The last two equations require assumptions about free variables of $u$,
$v$ and $t$, to avoid variable capture.)
\paragraph{Language-specific constructs} 

We use the monad:
\begin{align*}
  \db{\isample(t)}(\gamma)(U) & \defeq \db t(\gamma)(\{a~|~(1,a)\in U\}) \\
  \db{\iscore(t)}(\gamma)(U) & \defeq [(\max(\db t(\gamma),0),*)\in U]
\end{align*}
Here are some program equations to illustrate the semantics so far.
\begin{align*}
  \db{\iscore(7.0);\iscore(6.1)} & = \db{\iscore(42.7)} \\
  \left\llbracket\hspace*{-1.5mm}
  \begin{array}{l}
    \ilet\,x=\isample(\norDS(0.0,1.0))\\
    \iin\,\ireturn(x>0.0)
  \end{array}
  \hspace*{-1.5mm}\right\rrbracket\hspace*{-.5mm}
  &=\db{\isample(\berDS(0.5))} \\
  \left\llbracket\hspace*{-1.5mm}
  \begin{array}{l}
    \ilet\,x=\isample(\norDS(0.0,1.0))\\
    \iin\,\ireturn(x>x)
  \end{array}
  \hspace*{-1.5mm}\right\rrbracket\hspace*{-.5mm}
  &=\db{\ireturn(\ifalse)}
\end{align*}
 
\paragraph{Normalization}
We interpret $\inorm(t)$ as follows.
Each probability measure $p$ on $(\Rpz\times X)$ induces an
unnormalized posterior 
measure $\bar p$ on $X$: let
$\bar p(U)=  \int_{\Rpz \times U}r\;p(\D(r,x))$.
As long as the average score $\bar p(X)$ is not $0$ or $\infty$, we can normalize $\bar p$ to build a 
posterior probability measure on $X$. 
This construction forms a natural transformation
$\iota_X \colon \PM(\Rpz \times X)\to (\R \times \PM(X)) + 1 + 1$
\begin{equation}\label{eqn:iota}
\iota_X(p)\ \defeq \   \begin{cases}
    (1,*) & \text{if }\bar p(X)=0\\
    (2,*)&\text{if }\bar p (X)=\infty\\
    \Big(0,\big(\bar p (X),\,\lambda U.\; \frac{\bar p (U)}{\bar p(X)}\big)\Big)&\text{otherwise}
  \end{cases}
\end{equation}
Let $\db{\inorm(t)}(\gamma) \defeq \iota(\db t(\gamma))$.
Here are some examples:
\begin{align*}
&\left\llbracket\hspace*{-1.5mm}
\begin{array}{l@{}l}\inorm(&
\ilet\,x=\isample(\norDS(0.0,3.0))\,\iin\,\\&
\iscore(\mathit{density}{\_}\norDS (5.0,(x,1.0));\ireturn(x<4.5))
\end{array}
\right\rrbracket\hspace*{-1mm}
  \\&\qquad = \textstyle{\big(0,(0.949,\berDS(0.5))\big) }
  \\
        &\left\llbracket\hspace*{-1.5mm}\begin{array}{l@{}l@{}} 
                \inorm\big(&\ilet\ x=\isample(\berDS(0.25))\ \iin\ 
                \\
                                          &(\iif\ x\ \ithen \ \iscore(5.0)\ \ielse\ \iscore(
          2.0));
                \ireturn(x)\big)
        \end{array}\right\rrbracket\hspace*{-1mm}
  \\&\qquad = \textstyle{\big(0,(2.75,\berDS(\frac 5{11}))\big) }
  \\
  &\llbracket
  \inorm\big(\ilet\,x =
    \isample(\expDS(1.0))\,\iin\,\iscore(\exps{x})\big)\rrbracket
    = (2,*) 
  \\
  &\left\llbracket\hspace*{-1.5mm}
  \begin{array}{l@{}l}
    \inorm\big(& \ilet\,x=\isample(\betaDS(1,3))\\ 
           & \iin\,\iscore(x); \ireturn(x)\big)
  \end{array}
  \hspace*{-1.5mm}\right\rrbracket\hspace*{-.5mm}
  =\hspace*{-.5mm}
  \left\llbracket\hspace*{-1.5mm}
  \begin{array}{l}
          \inorm\big(
    \iscore(\frac{1}{1+3}); \\ 
          \ \ \isample(\betaDS(2,3))\big)
  \end{array}
  \hspace*{-1.5mm}\right\rrbracket
\end{align*}
The third equation shows how infinite model evidence errors can arise when working with infinite distributions.  In the last equation, the parameter $x$ of $\iscore(x)$ represents 
the probability of $\itrue$ under $\berDS(x)$. The equation expresses
the so called conjugate-prior relationship between
Beta and Bernoulli distributions, which has been
used to optimise probabilistic programs~\cite{yang_DALI2015}.

\paragraph{Monads} The interpretation of $\ilet$ and $\ireturn$ given above arises from 
the fact that $\PM\big(\Rpz\times (-)\big)$ is a commutative monad on the category of measurable spaces and measurable functions (see also~\cite[\S2.3.1]{doberkat-book}, \cite[\S6]{scibor:haskell}).
Recall that a commutative monad $(\Monad,\eta,\mu,\sigma)$ in general comprises an endofunctor $\Monad$ together with natural transformations $\eta_X \colon X\to\Monad(X)$, $\mu_X \colon \Monad(\Monad(X))\to\Monad(X)$, 
$\sigma_{X,Y} \colon X \times \Monad(Y)\to\Monad(X\times Y)$ satisfying some laws~\cite{kock:commutative}.
Using this structure we interpret $\ireturn$ and $\ilet$ following Moggi~\cite{moggi-monads}:
\begin{align*}
  \db{\icj\Gamma{\ireturn(t)}\bA}(\gamma) & \defeq {\eta_{\db\bA}\big(\db t(\gamma)\big)} \\
  \db{\icj{\Gamma}{\ilet\,x=t\,\iin\,u}\bB}(\gamma) 
  & \defeq \mu_{\db\bB}\big(\Monad(\db u)\big(\sigma_{\db\Gamma,\db\bA}(\gamma,\db t(\gamma))\big)\big)
\end{align*}
Concretely, we make $\PM(\Rpz\times (-))$ into a monad by combining the standard commutative monad structure~\cite{giry:monad} on~$\PM$ and the commutative monoid $(\Rpz,\cdot,1)$ with the monoid monad transformer.
\hide{
$(T^M,\eta^M,\mu^M,\sigma^M)$
with $T^M=T(M\times(-))$.
 The space $\Rpz$ forms a monoid under multiplication, so it remains for us to give 
a strong monad structure on $\PM$. First, $\PM$ forms a 
functor: for $f:X\to Y$, let $(\PM(f))(p)$ be the pushforward measure on~$Y$. 
The monad structure is given by 
\begin{align*}
&\eta_X(x)(U)\defeq[x\in U]
&&
\mu_X(p)(U)\defeq\int_q q(U)\,p(\D q)
\\[-2mm]
&\text{where\ }\\[-4mm]
&[x\in U]=\begin{cases}1&(x)\in U\\0&\text{otherwise}\end{cases}
&&\sigma_{X,Y}(x,p)(U)\defeq p\{y~|~(x,y)\in U\}
\end{align*}
}

(We record a subtle point about $\iota$~\eqref{eqn:iota}. It is \emph{not} a monad morphism,
and we we cannot form a commutative monad of \emph{all} measures because the Fubini property
does not hold in general.)

\subsection{Sequential Monte Carlo simulation}
\label{subsec:montecarlo}

The program equations above justify some simple program transformations. 
For a more sophisticated one, consider \emph{sequential Monte Carlo simulation}~\cite{doucetfreitasgordon:montecarlo}.
A key idea of its application to probabilistic programs is: `Whenever there is a $\iscore$, it is good to renormalize and resample'.
This increases efficiency by avoiding too many program executions with low scores~\cite[Algorithm~1]{paige-icml-2014}.

The denotational semantics justifies the soundness of this transformation. 
For a term with top-level $\iscore$, \ie a term of the form $(\ilet \,x=t\,\iin\,(\iscore(u);v))$ where $u$ and $v$ may have $x$ free:
%
%
\begin{align*}
&\db{\inorm(\ilet \,x=t\,\iin\,(\iscore(u);v))}
\\
&\begin{array}{@{}r@{}ll@{}}
=\ \db{\inorm(\icase\,&(\inorm(\ilet\,x=t\,\iin\,\iscore(u);\ireturn(x)))\,\iof\,
&\tiny 1
\\&(0,(e,d))\Rightarrow \iscore(e);\ilet\,x=\isample(d)\,\iin\,v
&\tiny 2
\\~|~&
(1,\ast)\Rightarrow \iscore(0) ;\ireturn(w)
&\tiny 3
\\~|~&
(2,\ast)\Rightarrow \ilet \,x=t\,\iin\,(\iscore(u);v))}
&\tiny 4
\end{array}\end{align*}
Let us explain the right hand side.
Line 1 renormalizes the program after the $\iscore$, and in non-exceptional execution 
returns the model evidence $e$ and a new normalized distribution $d$.
Line 2 immediately records the evidence $e$ as a score, and then resamples $d$, using the resampled value in the continuation $v$. 
Line 3 propagates the error of $0$: $w$ is a deterministic term of the right type whose choice does not matter.
Finally, line 4 detects an infinite evidence error, and undoes the transformation. This error does
not arise in most applications of sequential Monte Carlo simulation.

%
%
%


\section{Operational semantics}
\label{sec:fo-opsem}

In this section we develop an operational semantics for the first-order language.
There are several reasons to consider this, even though the denotational semantics is arguably straightforward.
First, extension to higher-order functions is easier in operational semantics than in denotational semantics.
Second, operational semantics conveys computational intuitions that are obscured in the denotational semantics.
We expect these computational intuitions to play an important role in studying approximate techniques for performing posterior inference, such as sequential Monte Carlo, in the future.

Sampling from probability distributions complicates operational semantics. 
Sampling from a discrete distribution can immediately affect control flow. 
For example, in the term
\[
        \begin{array}{@{}l@{}} 
                \ilet\,x\,{=}\,\isample(\berDS(0.5))\ \iin\
                \iif\,x\,\ithen\,\ireturn(1.1)\,\ielse\,\ireturn(8.0) 
        \end{array}
\]
the conditional depends on the result of sampling the Bernoulli distribution. The result is $1.1$ with probability $0.5$ (\textit{cf.}~\cite[\S2.3]{borgstrometal:bayesiantransformer}).

Sampling a distribution on $\R$ introduces another complication. 
Informally, there is a transition
\[
  \isample(\norDS(0.0,1.0))\longrightarrow \ireturn(r)
\] 
for every real $r$, but any single transition has zero probability.
We can assign non-zero probabilities to sets of transitions; informally:
\[
  \Pr\Bigl(\isample(\norDS(0.0,1.0))\longrightarrow \{\ireturn(r)~|~r\leq 0\}\Bigr) = 0.5\text.
\]
To make this precise we need a $\sigma$-algebra on the set of terms, which can be done using \emph{configurations} rather than individual terms. A configuration is a \emph{closure}: a pair $\conf{t}{\gamma}$ of a term $t$ with free variables and an environment $\gamma$ giving values for those variables as elements of a measurable space.  (See also \cite[\S3]{HurNRS15}, 
\cite[\S3]{borgstrometal:op-sem}.)

Sampling a distribution $p$ on $\R+\R$ exhibits both complications:
\begin{equation}
\begin{array}{r@{}l}
\ilet\,x=\isample(p)\,\iin\,\icase\,x\,\iof\,&(0,r)\Rightarrow \ireturn(r+1.0)
\\[0.5ex] |&(1,r)\Rightarrow \ireturn(r-1.0)
\end{array}
\label{eqn:example-sampleR+R}
\end{equation}
The control flow in the $\icase$ distinction depends on which summand is sampled, but there is potentially a continuous distribution over the return values.
We handle this by instantiating the choice of summand in the syntax, but keeping the 
value of the summand in the environment, so that expression~\eqref{eqn:example-sampleR+R} can make a step to the closure
\begin{equation*}
  \conf{\raisebox{-\baselineskip}{$\begin{array}{l}
    \ilet\ x = \ireturn(0,y)\,\iin \\
    \icase\,x\,\iof\,(0,r)\Rightarrow \ireturn (r+1.0) \\
    \phantom{\icase\,x\iof}|(1,r)\Rightarrow \ireturn (r-1.0)
  \end{array}$}}{y \mapsto 42.0}.
\end{equation*}

\newcommand{\onorm}{\nu}
A type is \emph{indecomposable} if it has the form $\mathbb R$ or $\iPM(\mathbb A)$, and a context $\Gamma$ is \emph{canonical} if it only involves indecomposable types.

\paragraph{Configurations}
Let $\ixjname\in\{\ipjname,\icjname\}$. A \emph{$\ixjname$-configuration} of type $\bA$ is a triple $\conf {\Gamma,t}\gamma$ comprising
a canonical context $\Gamma$, 
a term $\ixj \Gamma t \bA$, and an element $\gamma$ of the measurable space $\db{\Gamma}$.
We identify contexts that merely rename variables, such as
\begin{align*}
  &\conf{(x\colon \R,y\colon\iPM(\R)),f(x,y)}{(x\mapsto 42.0,y\mapsto \norDS(0.0,1.0))} \\
  \approx
  &\conf{(u\colon\R,v\colon\iPM(\R)),f(u,v)}{(u\mapsto 42.0,v\mapsto \norDS(0.0,1.0))}.
\end{align*}
We call d-configurations \emph{deterministic configurations}, and p-config\-urations \emph{probabilistic configurations}; they differ only in typing.
We will abbreviate configurations to $\conf{t}{\gamma}$ when $\Gamma$ is obvious.

\emph{Values} $v$ in a canonical context $\Gamma$ are well-typed deterministic terms of the form
\begin{equation}
        \label{eqn:fo-value}
        v,w \;::=\; x_i~\mid~*~\mid~(v,w)~\mid~(i,v)
\end{equation}
where $x_i$ is a variable in $\Gamma$.
Similarly, a probabilistic term $t$ in context $\Gamma$ is called \emph{probabilistic value} or \emph{p-value} if $t \equiv \ireturn(v_0)$ for some value $v_0$.  
Remember from Section~\ref{sec:fo-densem} that the denotational semantics of values is simple and straightforward.

Write $\ConP \bA$ and $\ConC \bA$ for the sets of deterministic and probabilistic configurations of type $\bA$, and make them into measurable spaces by declaring $U\subseteq \ConZ \bA$ to be measurable if the set $\{\gamma\in\db \Gamma ~|~\conf t \gamma\in U\}$ is measurable for all judgements $\ixj\Gamma t \bA$.
\begin{equation}\label{eq:configurationspace}
        \ConZ \bA  =  
        \sum_{\substack{
                (\Gamma,t)\\ 
                \text{$\Gamma$ canonical,}\\ 
                \text{$\Gamma \vdash_{\!\!\mathsf{z}}\;t : \bA$}}}
        \db{\Gamma}
\end{equation}

Further partition $\ConZ \bA$ into $\ConZV \bA$ and $\ConZN \bA$ based on whether
a term in a configuration is a value or not:
\begin{align*}
        \ConPV \bA & 
        {}= \{\langle \Gamma, t,\gamma \rangle \in \ConP \bA \,\mid\,\mbox{$t$ is a value}\} 
        \\
        \ConPN \bA & 
        {}= \{\langle \Gamma, t,\gamma \rangle \in \ConP \bA \,\mid\,\mbox{$t$ is not a value}\} 
        \\
        \ConCV \bA & 
        {}= \{\langle \Gamma, t,\gamma \rangle \in \ConC \bA \,\mid\,\mbox{$t$ is a p-value}\} 
        \\
        \ConCN \bA & 
        {}= \{\langle \Gamma, t,\gamma \rangle \in \ConC \bA \,\mid\,\mbox{$t$ is not a p-value}\} 
\end{align*}
Particularly well-behaved values are the \emph{ordered values} $\ipj {\Gamma} v {\mathbb A}$, where each variable appears exactly once, and in the same order as in $\Gamma$.

\begin{lemma}\label{lemma:ultimate-patterns}
  Consider a canonical context $\Gamma$, a type $\bA$, an ordered value $\ipj {\Gamma} v {\mathbb A}$, and the induced measurable function 
  \[
        \db{v}: \db{\Gamma} \to \db{\bA}.
  \]
  The collection of all such functions for given $\mathbb A$ is countable, and forms a coproduct diagram. 
\end{lemma}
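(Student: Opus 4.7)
The plan is to proceed by induction on the structure of the type $\mathbb{A}$, proving simultaneously that (i) every ordered value $\ipj{\Gamma}{v}{\mathbb{A}}$ uses a context $\Gamma$ determined by the syntactic shape of $v$, (ii) the set of all such $(\Gamma,v)$ is countable, and (iii) the induced family $\{\db{v}\colon \db{\Gamma}\to\db{\mathbb{A}}\}$ exhibits $\db{\mathbb{A}}$ as a coproduct in $\Meas$. The key preliminary observation is that, because contexts are canonical (only $\mathbb{R}$ and $\iPM(\mathbb{B})$ are allowed), the grammar of values (\ref{eqn:fo-value}) forces the shape of $v$ to mirror the shape of $\mathbb{A}$: for an indecomposable target type, the only ordered value is a single variable; for a product, only a pair; for a sum, only an injection.

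First I would dispatch the base cases. If $\mathbb{A}$ is indecomposable, the unique ordered value is $\ipj{x\colon\mathbb{A}}{x}{\mathbb{A}}$, inducing the identity $\db{\mathbb{A}}\to\db{\mathbb{A}}$; if $\mathbb{A}=1$, the unique ordered value is $\ipj{\emptyset}{*}{1}$, inducing the (iso) map $\db{\emptyset}=\{*\}\to\{*\}$. Each of these singleton families is trivially a coproduct diagram.

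For the inductive step on $\mathbb{A}=\mathbb{A}_0\times\mathbb{A}_1$, any ordered value must have the form $(v_0,v_1)$ with $\ipj{\Gamma_0}{v_0}{\mathbb{A}_0}$ and $\ipj{\Gamma_1}{v_1}{\mathbb{A}_1}$ ordered, and the associated context is the concatenation $\Gamma_0,\Gamma_1$. By induction hypothesis the families $\{\db{v_0}\}$ and $\{\db{v_1}\}$ present $\db{\mathbb{A}_0}$ and $\db{\mathbb{A}_1}$ as countable coproducts, so applying the distributivity of binary products over countable coproducts in $\Meas$ gives
\[
  \db{\mathbb{A}_0\times\mathbb{A}_1}\;\cong\;\Bigl(\sum_{v_0}\db{\Gamma_{v_0}}\Bigr)\times\Bigl(\sum_{v_1}\db{\Gamma_{v_1}}\Bigr)\;\cong\;\sum_{(v_0,v_1)}\db{\Gamma_{v_0},\Gamma_{v_1}},
\]
and the resulting coprojections are, unwinding the definitions, exactly the maps $\db{(v_0,v_1)}$. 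Countability is preserved because a product of two countable sets is countable. For $\mathbb{A}=\sum_{i\in I}\mathbb{A}_i$ with $I$ countable, ordered values are exactly pairs $(i,v_i)$ with $v_i$ an ordered value of $\mathbb{A}_i$; re-indexing
\[
  \db{\textstyle\sum_{i\in I}\mathbb{A}_i}=\sum_{i\in I}\db{\mathbb{A}_i}\;\cong\;\sum_{i\in I}\sum_{v_i}\db{\Gamma_{v_i}}\;=\;\sum_{(i,v_i)}\db{\Gamma_{v_i}}
\]
displays the coproduct, and a countable union of countable sets remains countable.

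The main obstacle I anticipate is the product case: one needs to know that binary product distributes over countable coproducts in $\Meas$, which is the standard fact that the $\sigma$-algebra on $\bigl(\sum_i X_i\bigr)\times\bigl(\sum_j Y_j\bigr)$ generated by the product boxes agrees with that on $\sum_{i,j}X_i\times Y_j$; once granted, verifying that the isomorphism is literally the tupling $((v_0,v_1)\mapsto\db{(v_0,v_1)})$ is a routine bookkeeping argument. Care also needs to be taken over the identification of contexts up to renaming, mentioned in the excerpt, to ensure the enumeration of $(\Gamma,v)$ pairs is well-defined.
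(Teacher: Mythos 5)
Your proposal is correct and follows essentially the same route as the paper: induction on the structure of types, with the product case resting on the distributivity of (finite) products over countable coproducts in $\Meas$ — exactly the "key fact" the paper's own one-line proof cites. Your version merely spells out the case analysis (indecomposable, unit, product, sum) that the paper leaves implicit.
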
 
\begin{proof}
  By induction on the structure of types. The key fact is that every type is a sum of products of indecomposable ones, because the category of measurable spaces is distributive, \ie the canonical map $\sum_{i\in I}(\mathbb A\times \mathbb B_i) \to \mathbb A\times \sum_{i\in I}\mathbb B_i$ is an isomorphism.
\end{proof}

For example, ${\bA=(\R\times \iboolty)+(\R\times \R)}$ has 3~ordered values,
 first $(\ipj{x\colon\R}{(0,(x,\itrue))}\bA)$, second $(\ipj{x\colon\R}{(0,(x,\ifalse))}\bA)$, and third $(\ipj{x\colon\R,y\colon\R}{(1,(x,y))}\bA)$, inducing a canonical measurable isomorphism $\R + \R + \R\times \R\cong\db\bA$.

\paragraph{Evaluation contexts}

We distinguish three kinds of \emph{evaluation contexts}: $\ctx{-}$ is a context for a deterministic term with a hole for deterministic terms; $\ectxc-$ and $\ectxp-$ are contexts for probabilistic terms, the former with a hole for probabilistic terms, the latter with a hole for deterministic terms.
\begin{equation}\label{eqn:fo-ctx}
  \begin{aligned}
        \ctx{-} & \;::=\; 
        (-)
        ~|~ \pi_j\, \ctx{-} 
        ~|~ (\ctx{-},t)
        ~|~ (v, \ctx{-})
        ~|~ (i,\,\ctx{-})
        \\
        & \phantom{\;::=\;}
        ~|~\icase\ \ctx{-}\ \iof\ \{(i,x)\Rightarrow t_i\}_{i \in I}
        ~|~ f(\ctx{-})
        \\[1ex]
        \ectxc{-} & \;::=\; 
        (-)~|~
        \ilet~{x=\ectxc-}~\iin~t 
        \\[1ex]
        \ectxp{-} & \;::=\; 
        \ectxc {\ireturn[-]}
        ~|~ \ectxc{\isample[-]}
        ~|~ \ectxc{\iscore[-]}
                    \\
        & \phantom{\;::=\;}
        ~|~ \icase\ \ectxc{-}\ \iof\ \{(i,x)\Rightarrow t_i\}_{i \in I}
  \end{aligned}
\end{equation}
where $t$, $t_i$ are general terms and $v$ is a value. 
\hide{A \emph{redex} is a term that has one of the following forms:
\begin{align*}
        &
        \pi_j(v,w),
        \ \ \ \icase\ (i',v)\ \iof\ \{(i,x)\Rightarrow t_i\}_{i \in I},
        \ \ \ f(v),
\ \         \ \inorm(t),
        \\
        & 
        \ilet \,{x=\ireturn(v)}\,\iin\, t,
\ \ \         \iscore(v),
\ \ \         \isample(v).
\end{align*}
Redexes on the first line are \emph{deterministic redexes}, those on the last line \emph{probabilistic redexes}.}

\subsection{Reduction}

Using the tools developed so far, we will define a measurable function
for describing the reduction of d-configurations, and a stochastic
relation for describing reduction of p-configurations:
\begin{align*}
&        {\longrightarrow} \;:\; \ConPN{\bA} \to \ConP{\bA},
\\
&        {\longrightarrow} \;:\;
        \ConCN{\bA} \times \Sigma_{\Rpz \times \ConC{\bA}} \to [0,1],
\end{align*}
parameterized by a family of measurable `normalization' functions
\begin{equation}
\label{eqn:onorm}
        \onorm_{\bA} \;:\;\ConC{\bA} \to \bigl(\Rpz \times P(\db\bA)\bigr) + 1 + 1
\end{equation}
indexed by types $\bA$.

\paragraph{Reduction of deterministic terms}

Define a type-indexed family of relations ${\longrightarrow}\subseteq \ConPN \bA\times \ConP \bA$
as the least one that is closed under the following rules.
\begin{align*}
        & 
        \conf {\Gamma, \pi_j(v_0,v_1)}\gamma \longrightarrow \conf{\Gamma,v_j}{\gamma}
        \\[1ex]
        &
        \conf {\Gamma,\icase\ (i',v)\ \iof\ \{(i,x)\Rightarrow t_i\}_{i \in I}}\gamma 
        \longrightarrow \conf{\Gamma,t_{i'}[v/x]}{\gamma}
        \\[1ex] 
        & 
        \begin{array}{@{}l@{}} 
                \conf {\Gamma,f(w)}\gamma \longrightarrow \conf{(\Gamma,\Gamma'),v}{(\gamma,\gamma')} 
                \\[0.5ex]
                \ (\text{$w$  a value} \wedge \text{$\ipj{\Gamma'}v\bA$ an ordered value} 
                \wedge f(\db{w}(\gamma))\,{=}\,\db{v}(\gamma'))
        \end{array}
        \\[1ex] 
        &
        \begin{array}{@{}l@{}} 
                \conf {\Gamma,\inorm(t)}\gamma \longrightarrow \conf{(\Gamma,x{:}\R,y{:}\iPM(\bB)),(0,(x,y))}{\gamma[x{\mapsto} r,y{\mapsto} p]} 
                \\[0.5ex] 
                \ (\bA \,{=}\,  (\R {\times} \iPM(\bB)+1+1) \wedge
                \onorm_{\bB}(\conf{\Gamma,t}\gamma) \,{=}\, (0,(r,p)) \wedge
                        x,y \,{\not\in}\, \Gamma)
        \end{array}
        \\[1ex]
        &
        \begin{array}{@{}l@{}} 
                \conf {\Gamma,\inorm(t)}\gamma \longrightarrow \conf{\Gamma,(i,*)}{\gamma}
                \\[0.5ex] 
                \qquad (\bA \,{=}\, (\R\times \iPM(\bB)+1+1) \wedge \onorm_\bB(\conf{\Gamma,t}\gamma) \,{=}\, (i,*),\ 
          i\in\{1,2\})
        \end{array}
\hide{        \\[1ex]
        &
        \begin{array}{@{}l@{}} 
                \conf {\Gamma,\inorm(t)}\gamma \longrightarrow \conf{\Gamma,(2,*)}{\gamma} 
                \\[0.5ex] 
                \qquad (\bA \,{=}\, \R\times \iPM(\bB)+1+1 \wedge \onorm_\bB(\conf{\Gamma,t}\gamma) \,{=}\, (2,*))
        \end{array}}
        \\[1ex]
        & 
        \infer[\qquad(\text{$\ctx{-}$ is not $(-)$})]{ 
                \conf {\Gamma, \ctx t }\gamma \longrightarrow \conf {\Gamma',\ctx {t'}}{\gamma'}
        }{ 
                {\conf {\Gamma, t} \gamma} \longrightarrow \conf {\Gamma',t'} {\gamma'}
        }
\end{align*}
The rule for $f(w)$ keeps the original context $\Gamma$ and the closure $\gamma$ because they might be used in the continuation,
even though they are not used in $v$. The rules obey the following invariant.

\begin{lemma}\label{lemma:type-pres}
  If $\conf{\Gamma,t}\gamma\longrightarrow \conf{\Gamma',t'}{\gamma'}$, then $\Gamma'=(\Gamma,\Gamma'')$ and $\gamma'=(\gamma,\gamma'')$ for some $\Gamma''$ and $\gamma''\in\db{\Gamma''}$.
\end{lemma}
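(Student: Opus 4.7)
The plan is to induct on the derivation of the reduction $\conf{\Gamma,t}\gamma\longrightarrow \conf{\Gamma',t'}{\gamma'}$ using the inductive definition of $\longrightarrow$ given just above, and in each base case simply read off the witnesses $\Gamma''$ and $\gamma''$ by inspection of the rule, taking $\Gamma''$ to be the empty context whenever the rule does not extend the context.

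First I would dispatch the rules that leave the context untouched: the projection rule, the case rule, and the two rules for $\inorm(t)$ that produce $(i,*)$. For each of these $\Gamma' = \Gamma$ and $\gamma' = \gamma$, so the witness is $\Gamma'' = \emptyset$ (with the empty environment). The two rules that genuinely extend the context are the $f(w)$ rule, where $\Gamma''$ is the context of the ordered value chosen by Lemma~\ref{lemma:ultimate-patterns} and $\gamma''$ is the $\gamma'$ appearing in the rule, and the first $\inorm$ rule, where $\Gamma'' = (x{:}\R, y{:}\iPM(\bB))$ with $\gamma'' = (r,p)$; the rule itself spells out that the new environment is $\gamma[x\mapsto r, y\mapsto p]$, which up to our identification of contexts renaming variables is exactly $(\gamma,\gamma'')$.

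The inductive case is the evaluation-context rule. Here we know $\conf{\Gamma,t}\gamma\longrightarrow \conf{\Gamma',t'}{\gamma'}$ by the induction hypothesis gives $\Gamma' = (\Gamma,\Gamma'')$ and $\gamma' = (\gamma,\gamma'')$, and we must show the same holds for $\conf{\Gamma,\ctx{t}}\gamma\longrightarrow \conf{\Gamma',\ctx{t'}}{\gamma'}$. This is immediate because the congruence rule propagates the same $\Gamma'$ and $\gamma'$ from the premise to the conclusion, so the same witnesses $\Gamma''$ and $\gamma''$ work.

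I do not expect a serious obstacle; the only point worth being careful about is the silent identification of contexts up to variable renaming used in the $\inorm$ and $f$ cases, so that writing $\gamma[x\mapsto r, y\mapsto p]$ or the rearranged closure from the $f(w)$ rule counts as $(\gamma,\gamma'')\in\db{\Gamma}\times\db{\Gamma''}$. Once that convention (already fixed earlier in this section) is invoked, the lemma follows by routine case analysis.
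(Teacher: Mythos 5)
Your proposal is correct and follows the same route as the paper, which simply states ``by induction on the structure of derivations''; your case analysis of the base rules (identifying $\Gamma''=\emptyset$ where the context is untouched, and reading off the extensions in the $f(w)$ and first $\inorm$ cases) together with the congruence case for evaluation contexts is exactly the intended argument. The one point you flag --- that the identification of contexts up to renaming is needed to read $\gamma[x\mapsto r,y\mapsto p]$ as $(\gamma,\gamma'')$ --- is a fair observation and is indeed covered by the convention fixed earlier in the section.
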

\begin{proof}By induction on the structure of derivations.\end{proof}
\noindent This lemma allows us to confirm that our specification of a relation ${\longrightarrow}\subseteq\ConPN\bA\times \ConP\bA$ 
is well-formed (`type preservation').

\begin{proposition}
  The induced relation is a measurable function.
\end{proposition}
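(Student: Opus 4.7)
The plan is to establish two properties in sequence: first that $\longrightarrow$ is a total, single-valued relation from $\ConPN{\bA}$ to $\ConP{\bA}$, and then that the resulting function is measurable. For the first, I would show by induction on the non-value term $t$ in a deterministic configuration that $t$ admits a unique decomposition $t = \ctx{r}$, where $\ctx{-}$ is a deterministic evaluation context as in~\eqref{eqn:fo-ctx} and $r$ is one of the shapes $\pi_j(v_0,v_1)$, $\icase\,(i',v)\,\iof\,\{(i,x)\Rightarrow t_i\}_{i\in I}$, $f(w)$, or $\inorm(u)$. Given this unique decomposition, exactly one reduction rule applies, and the reduct is determined: for $\inorm(u)$ the three outcomes are separated by the value of $\onorm_{\bB}(\conf{\Gamma,u}\gamma)$, and for $f(w)$ uniqueness of $(\Gamma',v,\gamma')$ follows from Lemma~\ref{lemma:ultimate-patterns}, since $f(\db{w}(\gamma))$ lies in the image of exactly one ordered value with uniquely determined parameters.

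For measurability, I would exploit the fact that $\ConPN{\bA}$ is a countable coproduct of the spaces $\db{\Gamma}$ indexed by canonical pairs $(\Gamma,t)$ as in~\eqref{eq:configurationspace}: it then suffices to show that for each fixed pair, the reduction map $\gamma \mapsto \conf{\Gamma',t'}{\gamma'}$ is measurable from $\db{\Gamma}$ into $\ConP{\bA}$. I would proceed by structural induction on the decomposition $t = \ctx{r}$. The $\pi_j$ and $\icase$ base cases are trivial since the syntactic part of the reduct is fixed and the environment is unchanged. For $f(w)$, Lemma~\ref{lemma:ultimate-patterns} provides a countable measurable partition of $\db{\bA}$ via ordered-value injections; precomposing with the measurable map $\gamma \mapsto f(\db{w}(\gamma))$ yields, on each piece, a measurable map into the corresponding summand of $\ConP{\bA}$. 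For $\inorm$, the case split is measurable because $\onorm_{\bB}$ is measurable by~\eqref{eqn:onorm}, and within the non-exceptional branch the map $(r,p) \mapsto (\gamma, r, p)$ into the appropriate summand is clearly measurable. The congruence rule is the inductive step: plugging a measurable inner reduction into a fixed evaluation context is measurable, since the context is a purely syntactic wrapper, and by Lemma~\ref{lemma:type-pres} the extension of $\gamma$ by any fresh $\gamma''$ can be absorbed into the coproduct structure of $\ConP{\bA}$.

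The main obstacle is the $f(w)$ case, where the syntactic form of the reduct — specifically, which summand of $\ConP{\bA}$ it inhabits — genuinely depends on $\gamma$ through $f(\db{w}(\gamma))$. Here Lemma~\ref{lemma:ultimate-patterns} is essential: without the countable coproduct decomposition of $\db{\bA}$ by ordered values, one could not split $\db{\Gamma}$ measurably across the various summands of $\ConP{\bA}$ that the reduction may target. A secondary subtlety is that the identification of configurations up to $\alpha$-renaming of the context must respect measurability, but since renaming induces canonical measurable isomorphisms between the corresponding summands, this poses no further difficulty and the induction goes through.
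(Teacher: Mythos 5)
Your proposal is correct and follows essentially the same route as the paper: the paper's proof also proceeds by induction on the structure of terms to establish progress, determinacy, and measurability, and it likewise singles out Lemma~\ref{lemma:ultimate-patterns} as the crucial ingredient for the case of applying a measurable function $f$. You have simply spelled out in more detail the unique-decomposition and countable-coproduct arguments that the paper leaves implicit.
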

\begin{proof}
  There are three things to show: 
  that the relation is entire (`progress');
  that the relation is single-valued (`determinacy');
  and that the induced function is measurable. 
  All three are shown by induction on the structure of terms.
  The case of application of measurable functions crucially uses Lemma~\ref{lemma:ultimate-patterns}.
\end{proof}

\paragraph{Reduction of probabilistic terms}

Next, we define the stochastic relation $\longrightarrow$ for probabilistic terms, combining two standard approaches:
for indecomposable types, which are uncountable, use labelled Markov processes, \ie give a distribution on the measurable set of resulting configurations; 
for decomposable types (sums, products \textit{etc.}), probabilistic branching is discrete and so a transition system labelled by probabilities suffices.

\begin{proposition}\label{prop:srel-comb}
  Let $(X_i)_{i\in I}$ be an indexed family of measurable spaces.
  Suppose we are given:
  \begin{itemize}
    \item a function $q\colon I\to[0,1]$ that is only nonzero on a countable subset $I_0 \subseteq I$,
      and such that $\sum_{i \in I_0} q(i)=1$;
    \item a probability measure $q_i$ on $X_i$ for each $i\in I_0$.
  \end{itemize}
  This determines a probability measure $p$ on $\sum_{i\in I}X_i$ by
  \[
    p(U)=\textstyle{\sum_{i\in I}q(i)\,q_i(\{a \mid (i,a)\in U\})}
  \]
  for $U$ a measurable subset of $\sum_{i\in I}X_i$,
  %
  %
\end{proposition}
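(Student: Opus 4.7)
The plan is direct verification of the probability-measure axioms, using the definition of the sum $\sigma$-algebra recalled in Section~\ref{sec:prelim}. First I would check that the formula for $p(U)$ is well-defined. For each measurable $U \subseteq \sum_{i \in I} X_i$ and each index $i \in I$, the slice $U_i \defeq \{a \in X_i \mid (i,a) \in U\}$ is the preimage of $U$ under the coprojection $\kappa_i \colon X_i \to \sum_{j \in I} X_j$, which is measurable by the definition of the sum $\sigma$-algebra; hence $U_i \in \Sigma_{X_i}$. Because $q$ vanishes outside the countable set $I_0$, the expression $\sum_{i \in I} q(i)\, q_i(U_i)$ is effectively a countable sum of non-negative reals bounded by $\sum_{i \in I_0} q(i) = 1$, so convergence is automatic.

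Next I would check the measure axioms. For $p(\emptyset) = 0$, every slice of $\emptyset$ is empty, so every term in the sum is zero. For total mass one, the slice of $\sum_{j \in I} X_j$ at index $i$ is $X_i$, so
\[
p\Bigl(\textstyle\sum_{j \in I} X_j\Bigr) = \sum_{i \in I_0} q(i)\, q_i(X_i) = \sum_{i \in I_0} q(i) = 1.
\]
For countable additivity, let $U = \bigcup_{n \in \N} U^{(n)}$ be a disjoint union of measurable sets. Slicing commutes with unions and preserves disjointness, so each $U_i$ is the disjoint union $\bigcup_{n \in \N} U^{(n)}_i$ in $X_i$. Applying countable additivity of each $q_i$ pointwise gives
\[
p(U) \;=\; \sum_{i \in I_0} q(i) \sum_{n \in \N} q_i(U^{(n)}_i).
\]
Since every summand is non-negative, Tonelli's theorem for counting measures lets us exchange the two summations, and the right-hand side becomes $\sum_{n \in \N} p(U^{(n)})$, as required.

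The only mildly subtle point is the accounting between the possibly uncountable index $I$ and its countable effective support $I_0$; since $q(i) = 0$ for $i \notin I_0$ and all terms are non-negative, sums indexed by $I$ are unambiguously interpreted as the corresponding sums over $I_0$, so no convergence question arises. I expect no genuine obstacle in this argument.
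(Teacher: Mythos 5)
Your verification is correct. Note that the paper itself gives no proof of this proposition --- it is stated as a routine fact and used immediately afterwards to assemble the stochastic relation for probabilistic reduction --- so there is no argument of the authors' to compare against. Your direct check of the axioms (measurability of the slices $U_i$, which with the paper's definition of the sum $\sigma$-algebra as sets of the form $\sum_{i\in I}U_i$ is immediate, vanishing on $\emptyset$, total mass one, and countable additivity via interchanging a double series of non-negative terms) is exactly the standard argument one would supply, and it contains no gaps.
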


We will use three entities to define the desired stochastic relation 
${\longrightarrow} \;:\; \ConCN{\bA} \times \SA{\Rpz \times \ConC{\bA}} \to [0,1]$.
\begin{enumerate}
  \item A countably supported probability distribution on the set $\{(\Gamma,t)~|~\icj\Gamma t \bA\}$ for each $C \in \ConCN{\bA}$. 
    We write $\RedCI C{\Gamma,t}$ for the probability of $(\Gamma,t)$. 
  \item A probability measure on the space $\db\Gamma$ for each $C \in \ConCN{\bA}$ and $(\Gamma,t)$ with $\RedCI C{\Gamma,t} \neq 0$. Write $\RedCC C{\Gamma,t}U$ for the probability of a measurable subset $U\subseteq \db \Gamma$.
  \item A measurable function $\Score \colon \ConCN{\bA} \to \Rpz$, representing the score of the one-step transition relation.
    %
    %
    (For one-step transitions, the score is actually deterministic.)
\end{enumerate}
These three entities are defined by induction on the structure of the syntax of $\bA$-typed p-configurations in Figure~\ref{fig:entities}. 
We combine them to define a stochastic relation as follows.
\begin{figure}[!b]
\hrule
  \begin{align*}
        &
        \RedCI{\conf{\Gamma,\ectxp t}{\gamma}}{\Gamma',\ectxp{t'}} \defeq
        \indicator{\conf{\Gamma,t}\gamma\longrightarrow \conf{\Gamma',t'}{\gamma'}}
        \\[1ex]
        & 
        \RedCI{\conf{\Gamma, \ectxc t}\gamma}
          {\Gamma', \ectxc{t'}}
        \defeq {\RedCI{\conf{\Gamma,t}\gamma}{\Gamma',t'}}
        \\[1ex]
        & 
        \RedCI{\conf{\Gamma,\ilet \,{x=\ireturn(v)}\,\iin\, t}\gamma}
          {\Gamma,t[v/x]} \defeq 1
        \\[1ex]
        & 
        \RedCI{\conf{\Gamma,\icase\,(j,v)\,\iof\,\{(i,x)\Rightarrow t_i\}_{i\in I}}\gamma}
        {\Gamma,t_{j}[v/x]} \defeq 1
        \\[1ex]
        &
        \RedCI{\conf{\Gamma,\iscore(v)}\gamma}{\Gamma,\ireturn(*)} \defeq 1
        \\[1ex]
        &
        \begin{array}{@{}l@{}}
        \RedCI{\conf{\Gamma,\isample(v)}\gamma}
          {(\Gamma,\Gamma'),\ireturn(v')}
          \\[0.5ex]
          \ \quad {} \defeq
          \begin{cases}
          \db{v}(\gamma)(\{\db{v'}(\gamma') \,\mid\,\gamma' \in \db{\Gamma'}\}) 
          &\parbox{3cm}{if $\ipj{\Gamma'}{v'}{\bA}$ ordered}
        \\[1ex]
        0&\text{otherwise}
        \end{cases}
        \end{array}
\end{align*}
\begin{align*} 
        &
        \begin{array}{@{}l@{}} 
                \RedCC{\conf{\Gamma,\ectxp t}\gamma} {{(\Gamma',\ectxp{t'})}} U 
                \\[0.5ex] 
                \ \quad
                {}\defeq \indicator{\conf{\Gamma,t}\gamma\longrightarrow \conf{\Gamma',t'}{\gamma'} \wedge \gamma'\,{\in}\, U} 
        \end{array}
        \\[1ex]
        &
        \RedCC{\conf{\Gamma, \ectxc t}\gamma}{{(\Gamma',\ectxc{t'})}}U
        \defeq \RedCC{\conf{\Gamma,t}\gamma}{(\Gamma',t')} U
        \\[1ex]
        &
        \RedCC{\conf{\Gamma,\ilet \,{x=\ireturn(v)}\,\iin\, t}\gamma}{(\Gamma,t[v/x])} U
         \defeq \indicator{\gamma \in U}
        \\[1ex]
        & \RedCC{\conf{\Gamma,\icase\,(j,v)\,\iof\,\{(i,x)\,{\Rightarrow}\, t_i\}_{i\in I}}\gamma\!}{\Gamma,t_j[v/x]}{\!U}
                \defeq \indicator{\gamma\,{\in}\, U}
        \\[1ex]
        \displaybreak[0]
        &
        \RedCC{\conf{\Gamma,\iscore(v)}\gamma}{(\Gamma,\ireturn(*))} U
        \defeq\indicator{\gamma \in U}
        \\[1ex]
        &
        \RedCC{\conf{\Gamma,\isample(v)}\gamma}
          {((\Gamma,\Gamma'),\ireturn(v'))} U 
          \\&\ \quad\defeq\frac{
             \db{v}(\gamma) (\{\db{v'}(\gamma') \,\mid\,\gamma' \in \db{\Gamma'} \wedge (\gamma,\gamma') \in U\})}
             {\db{v}(\gamma) (\{\db{v'}(\gamma') \,\mid\,\gamma' \in \db{\Gamma'}\})}
        \displaybreak[0]
\\[2ex]        &
        \begin{array}{@{}ll}
          \Score(\conf{\Gamma,\ectxp t}\gamma) \defeq 1
          &
            \Score(\conf{\Gamma, \ectxc t}\gamma) \defeq {\Score(\conf{\Gamma,t}\gamma)}
          \\[2ex]
         \Score(\conf{\Gamma,\isample(v)}\gamma) \defeq 1
          &
        \end{array}
        \\[1ex]
        &
          \Score(\conf{\Gamma,\iscore(v)}\gamma) \defeq \max(\db{v}(\gamma),0)
          \\[1ex]&
        \Score(\conf{\Gamma,\ilet \,{x=\ireturn(v)}\,\iin\, t}\gamma) \defeq 1
        \\[1ex]
        &
        \Score(\conf{\Gamma,\icase\,(j,v)\,\iof\,\{(i,x)\Rightarrow t_i\}_{i\in I}}\gamma) \defeq 1
  \end{align*}
  \vspace{-2ex}
  \caption{Entities used to define reduction of probabilistic terms\label{fig:entities}}
\end{figure}

\begin{proposition}
  The map $\ConCN\bA\times \Sigma_{\Rpz \times \ConC{A}} \to [0,1]$ that sends $(C,U)$ to $\RedC C U$, defined as
  \[
    \sum_{(\Gamma,t)} \bigRedCI{C}{\Gamma,t} 
        \bigRedCC{C}{\Gamma,t}{\{\gamma \mid (\Score(C),\conf{\Gamma,t}\gamma)\,{\in}\, U\}},
  \]
  is a stochastic relation.
\end{proposition}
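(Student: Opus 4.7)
By Definition~\ref{def:srel} there are two obligations: first, for each fixed $C \in \ConCN{\bA}$ the map $U \mapsto \RedC{C}{U}$ is a probability measure on $\Rpz \times \ConC{\bA}$; second, for each fixed measurable $U \subseteq \Rpz \times \ConC{\bA}$ the map $C \mapsto \RedC{C}{U}$ is measurable on $\ConCN{\bA}$. The strategy is to recognize the defining formula as the pushforward, along the measurable map $\gamma \mapsto (\Score(C), \conf{\Gamma,t}\gamma)$, of the probability measure on $\sum_{(\Gamma,t)} \db{\Gamma} = \ConC{\bA}$ produced by Proposition~\ref{prop:srel-comb} from the two families $(\Gamma,t) \mapsto \RedCI{C}{\Gamma,t}$ and $\RedCC{C}{\Gamma,t}{-}$.

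For the first obligation, I would inspect Figure~\ref{fig:entities} case by case. In the cases of $\ilet\,x=\ireturn(v)\,\iin\,t$, $\icase$, $\iscore$, and the reduction contexts $\ectxp{-}$ and $\ectxc{-}$, the support of $\RedCI{C}{-}$ is a single reduct of total weight $1$, and $\RedCC{C}{\Gamma,t}{-}$ is visibly a Dirac (possibly via the deterministic reduction proved measurable in the previous proposition), hence a probability measure. The substantive case is $\isample(v)$: by Lemma~\ref{lemma:ultimate-patterns} the ordered values $\ipj{\Gamma'}{v'}{\bA}$ form a countable coproduct diagram, so the sum $\sum_{(\Gamma',v')} \db{v}(\gamma)(\{\db{v'}(\gamma') : \gamma' \in \db{\Gamma'}\})$ telescopes to $\db{v}(\gamma)(\db{\bA}) = 1$, which both confirms that the support is countable and gives normalization. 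Each conditional $\RedCC{\conf{\Gamma,\isample(v)}\gamma}{((\Gamma,\Gamma'),\ireturn(v'))}{-}$ is a probability measure on $\db{(\Gamma,\Gamma')}$ supported on the fiber $\{\gamma\} \times \db{\Gamma'}$, by construction. Proposition~\ref{prop:srel-comb} then assembles these into a probability measure on $\ConC{\bA}$, and pushforward along $\kappa \mapsto (\Score(C), \kappa)$ yields a probability measure on $\Rpz \times \ConC{\bA}$.

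For the second obligation, the coproduct structure~\eqref{eq:configurationspace} of $\ConCN{\bA}$ reduces the claim to showing, for each syntactic shape $(\Gamma, t)$, that $\gamma \mapsto \RedC{\conf{\Gamma,t}\gamma}{U}$ is measurable on $\db{\Gamma}$. I would proceed by induction on $t$. For the deterministic-context case $\ectxp{t}$, measurability piggybacks on the measurability of the deterministic reduction function established in the previous proposition. For $\ilet$-$\ireturn$, $\icase$, and $\iscore$, the formula reduces to $\indicator{\gamma \in U_{\Gamma,t}}$ scaled by the measurable function $\gamma \mapsto \Score$, which is either constant or the measurable composition $\max(\db{v}(-),0)$; the compositional case $\ectxc{t}$ propagates measurability through the induction hypothesis. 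The main obstacle is the $\isample(v)$ case: here I must show that for each ordered value $(\Gamma', v')$, the map $\gamma \mapsto \db{v}(\gamma)(S_\gamma)$, where $S_\gamma = \{\db{v'}(\gamma') : \gamma' \in \db{\Gamma'},\, (\gamma,\gamma')\in U\}$, is measurable. The plan is to use that $\db{v'}$ embeds $\db{\Gamma'}$ as a measurable coproduct inclusion into $\db{\bA}$ (Lemma~\ref{lemma:ultimate-patterns}), so the $\gamma$-fibers of $U$ in $\db{\Gamma'}$ transport to a measurable family of subsets of $\db{\bA}$; I would then extend measurability of $p \mapsto p(V)$ for fixed measurable $V$ (which is exactly the generating property of $\Sigma_{\PM(\bA)}$) to the parameterized statement by a monotone class / Dynkin argument over the $\sigma$-algebra of $\db{\Gamma} \times \db{\bA}$, and finally compose with the measurable map $\gamma \mapsto \db{v}(\gamma)$.
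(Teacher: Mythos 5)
Your proposal is correct and follows essentially the same route as the paper's proof: verify the hypotheses of Proposition~\ref{prop:srel-comb} by induction on the term to get a probability measure for each fixed configuration, then use the coproduct structure of $\ConCN\bA$ and a second induction on the term shared by each summand to establish measurability in the configuration argument. The paper leaves both inductions as one-line sketches, whereas you fill in the substantive $\isample$ cases (via Lemma~\ref{lemma:ultimate-patterns} and a monotone-class argument), but the decomposition and key lemmas are the same.
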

\begin{proof}
  For each p-configuration $C = \conf{\_,t}{\_}$, use induction on $t$ to see that the probability distribution $\bigRedCI{C}{-}$ on pairs $(\Gamma',t')$ and the distribution $\RedCC{C}{(-)}{(-)}$ indexed by such pairs satisfy the conditions in Proposition~\ref{prop:srel-comb}.
  It follows that the partial evaluation $\RedC{C}{(-)}$ of the function in the statement is a probability measure, so it suffices to establish measurability of the other partial evaluation $\RedC{(-)}{U}$. 
  Recall that $\ConCN\bA$ is defined in terms of the sum of measurable spaces, and that all p-configurations in each summand have the same term. 
  Finally, use induction on the term shared by all p-configurations in the summand to see that the restriction of $\RedC{(-)}{U}$ to each summand is measurable.
\end{proof}

\subsection{Termination}
\label{subsec:fo-termination}

To see that the reduction process terminates, we first define the transitive closure. 
This is subtle, as sampling can introduce countably infinite branching; although each 
branch will terminate, the expected number of steps to termination can be infinite.

We use the deterministic transition relation to define an evaluation relation ${\Downarrow}\subseteq \ConP \bA\times\ConPV\bA$, by setting $C\Downarrow D$ if ${\exists n.\,C\Downarrow^n D}$, where
\[
\inferbackusing{}
{C\Downarrow^0 C}
{(C\in\ConPV\bA)}
\qquad
\inferback
{D\Downarrow^n E \qquad C\longrightarrow D}
{C\Downarrow^{n+1} E}
\]
To define evaluation for probabilistic configurations, we need \emph{sub-stochastic relations}: functions $f \colon X\times\Sigma_Y\to[0,1]$ that are measurable in $X$, satisfy $f(x,Y) \leq 1$ for every $x \in X$, and are countably additive in $Y$, \ie $f(x,\bigcup_{i \in \N} U_i) = \sum_{i \in \N} f(U_i)$ for a sequence $U_1,U_2,\ldots$ of disjoint measurable sets.
Thus a stochastic relation (as in Definition~\ref{def:srel}) is a sub-stochastic relation with $f(x,Y)=1$. 
Define a sub-stochastic relation
\[
  \PrEval{-}{-}:\ConC \bA\times \SA{\Rpz \times \ConCV\bA}\to[0,1]
\] 
by $\PrEval CU \defeq \sum_n\PrEvaln n C U$, where 
$\PrEvaln 0 C U$ is given by $[(1,C)\in U]$, and $\PrEvaln {n+1} CU$ is 
\[
  \int_{\Rpz\times\ConC\bA}\hspace{-1cm}\PrEvaln n D {\{(s,E)\,|\,(r\cdot s,E)\in U\}}\ \RedC C {\D(r,D)}.
\]

\begin{proposition}[Termination]\label{prop:fo:termination}
  Evaluation of deterministic terms is a function: $\forall C.\,\exists D.\,C\Downarrow D$.
  Evaluation of probabilistic terms is a stochastic relation: $\forall C.\,\PrEval C {(\Rpz\times \ConCV\bA)}=1$.
\end{proposition}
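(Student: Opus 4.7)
My plan is to prove both parts by a simultaneous structural induction on the term $t$ in $C=\conf{\Gamma,t}\gamma$, so that the probabilistic case can invoke the deterministic case on deterministic subterms appearing inside an evaluation context.

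For the deterministic statement, I would design a well-founded measure $N(t)\in\N$ that strictly decreases with every reduction step. The subtlety is that the substitutions $t[v/x]$ produced by $\ilet$ and $\icase$ reductions can, when $v$ is an injection, expose a fresh $\icase$-redex deeper inside $t$. To absorb this, I would weight every syntactic $\icase$-node by $1+\max_i N(t_i)$ and propagate the count through evaluation contexts; this works because first-order values $v ::= x_i \mid * \mid (v,w) \mid (i,v)$ contain no projection, no function symbol, and no $\inorm$, $\ilet$, $\isample$, or $\iscore$, so substituting $v$ into $t$ can only fire $\icase$-redexes already present in $t$ and never create redexes of any other form. Determinism having been established, and the rule for $\inorm(t)$ taking a single step via $\onorm_\bB$ without recursing into $t$, induction on $N(t)$ yields a unique value $D$ with $C\Downarrow D$.

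For the probabilistic statement I would extend $N$ to probabilistic terms in the same spirit and additionally verify mass preservation for each non-p-value: $\RedC{C}{\Rpz\times\ConC{\bA}}=1$. The only delicate case is $\isample(v)$: by Lemma~\ref{lemma:ultimate-patterns} the ordered values of type $\bA$ form a countable coproduct decomposition of $\db{\bA}$, so summing $\RedCI{\cdot}{(\Gamma',\ireturn(v'))}$ over this countable family and integrating $\RedCC{\cdot}{\cdot}{\cdot}$ over each $\db{\Gamma'}$ reassembles $\db{v}(\gamma)(\db{\bA})=1$, with Proposition~\ref{prop:srel-comb} providing the discrete/continuous combination. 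For the event $U=\Rpz\times\ConCV{\bA}$, the preimage condition $\{(s,E)\mid(r\cdot s,E)\in U\}$ in the definition of $\PrEvaln{n+1}$ simplifies to $U$ itself (since $r,s\ge 0$), so using monotone convergence to swap $\sum_n$ with the integral yields $\PrEval{C}{U} = \int_{\Rpz\times\ConC{\bA}} \PrEval{D}{U}\,\RedC{C}{\D(r,D)}$, which by the inductive hypothesis and mass preservation equals $1$.

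The main obstacle is the careful design of $N$ so that it decreases under the cascading $\icase$-reductions that can be unleashed by a single substitution of a nested injection-value; the countable branching from $\isample$ is, by contrast, the easier ingredient, dispatched by the countable additivity built into Proposition~\ref{prop:srel-comb}.
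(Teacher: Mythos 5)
Your handling of the probabilistic layer is essentially right: mass preservation of the one-step relation is already the content of the proposition that $\RedC{C}{(-)}$ is a stochastic relation; your observation that $\{(s,E)\mid (r\cdot s,E)\in \Rpz\times\ConCV\bA\}$ equals $\Rpz\times\ConCV\bA$ again is correct; and the monotone-convergence unfolding of $\PrEval{C}{U}$ into $\int\PrEval{D}{U}\;\RedC{C}{\D(r,D)}$ is the right reduction to a one-step induction. The substitution bookkeeping on the deterministic side is also mostly sound, though note that substituting a pair value can expose a projection redex $\pi_j(v_0,v_1)$, not only $\icase$-redexes; your count absorbs this anyway because first-order values contain no redex-forming constructors, so $N(t[v/x])=N(t)$.

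The genuine gap is the well-definedness of $N$ itself. The language permits $\icase$ over \emph{countably infinite} index sets $I$ with branches $t_i$ of unbounded size, so $1+\max_{i}N(t_i)$ need not be a natural number: take $\bA=\sum_{i\in\N}1$ and let $t_i$ be $i$ sequenced occurrences of $\iscore(1.0)$ followed by $\ireturn(*)$. This is exactly the subtlety flagged at the start of \S\ref{subsec:fo-termination} --- every branch terminates, but the number of steps is unbounded across branches --- and it rules out \emph{any} $\N$-valued measure that strictly decreases along every reduction, since such a measure would uniformly bound the length of all executions from a given configuration. You must either pass to ordinal-valued measures (replacing $\max$ by $\sup_i(N(t_i)+1)$, which stays countable), or drop the measure and instead strengthen the structural induction hypothesis so that it is closed under substitution of values, i.e.\ a unary logical relation. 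The latter is how the paper actually argues in the higher-order case (Proposition~\ref{prop:ho:termination}), and its one-line ``induction on the structure of terms'' for the first-order case is best read in that spirit. With either repair, the rest of your argument goes through.
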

\begin{proof}
  By induction on the structure of terms.
\end{proof}

Termination comes from the omission of recursion in our language.  We do so for now,
because our semantic model does not yet handle higher-order recursion, and probabilistic 
while-languages are already well-understood~(\eg \cite{kozen:probablistic}). 
(See also the discussion about domain theory in~\S\ref{sec:ho-densem}.) 

\hide{
\newcommand{\Meas}{\mathbf{Meas}}
The same thing can happen in unbounded non-determinism. 
There, we can still express transitive closure as a least fixed point.
The following is the analogue of that construction for sub-stochastic relations
(which are like stochastic relations but where the probabilities sum to less than one).
\begin{proposition}\begin{enumerate}
\item 
For measurable spaces $\mathbb A$ and $\mathbb B$, 
the sub-stochastic relations \[{r:\mathbb A\times \sigma(\mathbb B)\to[0,1]}\]
form a cpo with a bottom element,
under the pointwise order\\($r\leq r'$ if $\forall a,X.\ r(a,X)\leq r'(a,X)$).
\item 
Fix a type $A$. 
Let $T$ be the monad on measurable spaces given by
\[T(X)=\mathit{SubGiry}(\Rpz\times(\ConCV A + X))\]
where $\mathit{SubGiry}$ is the subdistributions monad, $\Rpz\times(-)$ is the 
monad from the monoid $\Rpz$, and $(\ConCV A+(-))$ is the `exceptions' monad.
We can understand the stochastic relation $(\longrightarrow)$
as a Kleisli morphism $\ConCN A\to T(\ConCN A)$. 
Now, consider the endofunction $\Phi$ on the set of substochastic relations
$r:\ConCN A\to T0$ given by
\[
\Phi(r)\ =\ \ConCN A\xrightarrow{(\longrightarrow)} T(\ConCN A)\xrightarrow{r^\sharp} T0\text.
\]
The endofunction $\Phi:\Meas(\ConCN A,T(0))\to \Meas(\ConCN A,T(0))$ is continuous.
\end{enumerate}
\end{proposition}
As $\Phi$ is a continuous endofunction on a cpo it has a least fixed point.
This is a sub-stochastic relation 
\[(\longrightarrow^*):\ConCN A\times \sigma(\Rpz\times\ConCV A)\to[0,1]\text.\]
We think of this as the transitive closure of $(\longrightarrow)$. 
\begin{proposition}[Termination]
The least fixed point $\longrightarrow$ is a stochastic relation,
\ie,\\$\Prob{C\longrightarrow^* (\Rpz\times \ConCV A)}=1$.
\end{proposition}
\begin{proof}
\newcommand*{\ipctx}[2]{#1\mathrel{\vdash\!\!\!\!_\mathsf{p}}#2}
\newcommand*{\icctx}[2]{#1\mathrel{\vdash\!\!\!\!_\mathsf{c}}#2}
\newcommand{\RedP}[2]{\mathrm{R}(\ipctx{#1}{#2})}
\newcommand{\RedPV}[2]{\mathrm{RV}(\ipctx{#1}{#2})}
\newcommand{\RedCV}[2]{\mathrm{RV}(\icctx{#1}{#2})}
We define sets
\begin{align*}
\RedP \Gamma A&\subseteq\{t~|~\ipj\Gamma t A\}
\\
\RedPV \Gamma A&\subseteq\{t~|~\ipj\Gamma t A~\&~
t\text{ a value}\}
\\
\RedC \Gamma A&\subseteq\{t~|~\icj\Gamma t A\}
\\
\RedCV \Gamma A&\subseteq\{t~|~\icj\Gamma t A~\&~
t\text{ an e-value}\}
\end{align*}
for each canonical context $\Gamma$ and each type $A$, as follows:
\begin{align*}
\RedP \Gamma A&=
\{t~|~\forall \gamma\in\db\Gamma.\,\conf {\Gamma,t}\gamma\rightarrow^* \conf{\Gamma',t'}{\gamma'}~\&~t'\in\RedPV {\Gamma'} A\}
\\
\RedC \Gamma A&=\{t~|~\forall \gamma.\,\Prob{\conf t \gamma\rightarrow^* \Rpz\times \coprod_{\Gamma'}\RedCV {\Gamma'} A\times \db\Gamma}=1\}
\\
\RedCV \Gamma A&=\{\ireturn\,v~|~v\in\RedPV \Gamma A\}
\\[5pt]
\RedPV \Gamma {\mathbb A}&=\{x~|~(x:\mathbb A)\in\Gamma\}
\qquad\text{where $\mathbb A$ indecomposable}
\\
\RedPV \Gamma{\prod_{j\in J}A_j}&
=\{{\vec v}~|~\forall j.\,v_j\in\RedPV \Gamma{A_j}\}
\\
\RedPV \Gamma{\sum_{i\in I}A_i}&
=\{(i,v)~|~v\in\RedPV \Gamma{A_i}\}
\\
\RedPV \Gamma{T(A)}&=
\{{\ithunk(t)}~|~{t}\in\RedC \Gamma A\}
\\
\RedPV \Gamma{A\Rightarrow B}&=
\{{\lambda x.t}~|~\forall \Gamma'\supseteq \Gamma.\,\forall u\in\RedPV {\Gamma'} A.\,{t[u/x]}\in\RedP {\Gamma'} B\}
\end{align*}
Fundamental lemma:
\[
\forall \Gamma,n.\forall (\ipj{\Gamma,x_1:A_1,\dots x_n:A_n}tB).\ \forall v_1\in \RedPV \Gamma{A_1}\dots v_n\in\RedPV \Gamma{A_n}.\ t[\vec v/\vec x]\in\RedP \Gamma B
\]
\[
\forall \Gamma,n.\forall (\icj{\Gamma,x_1:A_1,\dots x_n:A_n}tB).\ \forall v_1\in \RedPV \Gamma{A_1}\dots v_n\in\RedPV \Gamma{A_n}.\ t[\vec v/\vec x]\in\RedC \Gamma B
\]
\todo{Check the proof (induction on term formation).}
\end{proof}
}

\subsection{Soundness}
\label{sec:fo-opsem-sound}
\newcommand{\semvalueconf}{s_{V\icjname}}
For soundness, extend the denotational semantics to configurations:
\begin{itemize}
  \item define $s_{\ipjname} \colon \ConP \bA\to \db \bA$ by $\conf{\Gamma,t}\gamma \mapsto \db t(\gamma)$;
  \item define $s_{\icjname}:\ConC{\bA} \times \Sigma_{\Rpz \times \db \bA} \to [0,1]$ similarly by 
          $(\conf{\Gamma,t}\gamma, U) \,{\mapsto}\, \db t(\gamma)(U)$. We may also use this stochastic relation as a measurable function 
          $s_{\icjname} \colon \ConC{\bA} \,{\to}\, \PM(\Rpz \,{\times}\, \db \bA)$;
  \item define $\semvalueconf:\ConCV{\bA} \to \db\bA$ by ${\conf{\Gamma,\ireturn(v)}\gamma\mapsto \db v(\gamma)}$. Note that 
in this first-order language, $\semvalueconf$ is a surjection
which equates two value configurations iff they are related by weakening, contraction
or exchange of variables.
\end{itemize}

\begin{assumption}
  Throughout this section we assume that the normalization function $\onorm$ on configurations~\eqref{eqn:onorm} is perfect, \ie it corresponds to $\iota$, the semantic normalization function~\eqref{eqn:iota}:
\[\onorm_\bA(\conf {\Gamma,t}\gamma)=\iota_{\db\bA}(s_\icjname(\conf{\Gamma,t}\gamma))\text.\]
\hide{ makes the following diagram commute.
  \[\xymatrix@R-1ex@C-1ex{
        \ConC{\mathbb A}\ar[d]_{s_\icjname}\ar[rr]^-{\onorm_\bA}&&\bigl(\Rpz\times \ConP {\iPM(\mathbb A)}\bigr)+1+1 
        \ar[d]^{(\identity \times s_\ipjname) + \identity + \identity}
        \\ 
        \PM(\Rpz\times \db {\mathbb A})\ar[rr]_-{\iota_{\db\bA}}&&\bigl(\Rpz\times \PM\db{\mathbb A}\bigr)+1+1 
  }\]}
\end{assumption}

\begin{lemma}[Context extension]\mbox{}\label{lem:ctx-ext}
  Let $\ixjname\in\{\ipjname,\icjname\}$. 
  Suppose that $\conf{\Gamma,t}\gamma$ and $\conf {(\Gamma,\Gamma'),t}{(\gamma,\gamma')}$ are configurations in $\ConZ \bA$. 
  Then $s_\ixjname(\conf{\Gamma,t}\gamma)=s_\ixjname\conf{(\Gamma,\Gamma'),t}{(\gamma,\gamma')}$.
\end{lemma}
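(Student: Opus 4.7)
The plan is to proceed by induction on the typing derivation of $t$, noting that the lemma is a standard \emph{weakening} property of the denotational semantics. Since $t$ is well-typed under $\Gamma$, all its free variables lie in $\Gamma$, so in the larger context $(\Gamma,\Gamma')$ the term $t$ is interpreted by composing the original interpretation with the projection $\pi\colon \db{(\Gamma,\Gamma')}\to\db\Gamma$. Concretely, I would first establish the stronger \emph{syntactic weakening} statement: for every derivation $\ixj\Gamma t\bA$, the same term also derives $\ixj{(\Gamma,\Gamma')}t\bA$ (for any extension $\Gamma'$ of fresh variables), and the denotation satisfies $\db{t}_{(\Gamma,\Gamma')}(\gamma,\gamma') = \db{t}_\Gamma(\gamma)$ for all $\gamma\in\db\Gamma$ and $\gamma'\in\db{\Gamma'}$. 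The lemma then follows by evaluating both sides at the given $\gamma,\gamma'$.

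For the induction, each syntactic case reduces to the observation that the defining clause in \S\ref{sec:fo-densem} mentions the environment only by feeding it to the interpretations of immediate subterms, all of which enjoy the induction hypothesis. The variable case $\ipj{\Gamma,x\colon\bA,\Gamma''}{x}{\bA}$ is immediate because $\db x$ is projection onto the $x$-component, which is unaffected by additional fresh variables at the end. The constructors and destructors for sums, products, and measurable function application $f(t)$ are pure congruences: e.g.\ $\db{f(t)}_{(\Gamma,\Gamma')}(\gamma,\gamma') = f(\db t_{(\Gamma,\Gamma')}(\gamma,\gamma')) = f(\db t_\Gamma(\gamma))$ by IH. The same pattern handles the probabilistic constructs $\ireturn$, $\ilet$, $\isample$, $\iscore$; for $\inorm(t)$ one uses the assumption relating $\onorm_\bA$ to $\iota$ together with naturality of $\iota$, so weakening on the probabilistic side transports through.

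The only slightly delicate case is $\ilet\,x=t\,\iin\,u$, whose denotation is a Lebesgue integral over $\db t(\gamma)$ of the family $\db u(\gamma,x)$. Here I would apply the IH to $t$ (giving equality of the measures being integrated against) and to $u$ in the context $(\Gamma,x\colon\bA)$ extended by $\Gamma'$, and then invoke standard pushforward/change-of-variables reasoning: the integrand agrees pointwise in $x$, so the integrals agree. No genuine obstacle is expected; the main thing to be careful about is bookkeeping of the environment projections and handling the bound variables in $\ilet$ and $\icase$ without variable capture, which is why the lemma is phrased with the additional context $\Gamma'$ appended on the right.
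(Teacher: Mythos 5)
The paper states this lemma without proof, treating it as routine semantic weakening, and your induction on typing derivations (reducing everything to the fact that $\db{t}_{(\Gamma,\Gamma')}$ factors through the projection $\db{(\Gamma,\Gamma')}\to\db{\Gamma}$) is exactly the argument it implicitly relies on; the plan is correct, including the pointwise-agreement-of-integrands step for $\ilet$. One small correction: the $\inorm(t)$ case needs no appeal to the assumption relating $\onorm$ to $\iota$ (that assumption belongs to the soundness proof, not here), since $\db{\inorm(t)}(\gamma)=\iota(\db{t}(\gamma))$ and the claim follows simply by applying $\iota$ to both sides of the induction hypothesis.
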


\begin{proposition}[Soundness]\label{prop:fo:soundness}
  The following diagrams commute (in the category of measurable functions, and stochastic relations, respectively). 
  \[
  \xymatrix@R-6ex@C-2ex{
    \ConPN \bA\ar[dr]^-{s_\ipjname}\ar[dd]_(0.43){\text{one-step}}_(0.57){\text{reduction}}
    \\
    &\db \bA\\
\ConP \bA\ar[ur]_-{s_\ipjname} 
      }
  \xymatrix@R-6ex@C-2ex{
    \ConCN 
    \bA\ar[r]^{s_\icjname}\ar[dd]_(0.43){\text{one-step}}_(0.57){\text{reduction}}
    &
    \Rpz\times \db \bA
    \\\mbox{\phantom{\db\bA}}\\
    \Rpz\times\ConC \bA\ar[r]_-{\identity \times s_\icjname} 
    &
    \Rpz\times\Rpz\times \db\bA\ar[uu]_-{\text{multiplication}}
  }
  \]
\end{proposition}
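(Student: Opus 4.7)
\smallskip

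\textbf{Proof plan.} I would prove both diagrams simultaneously, by case analysis on the one-step reduction rules, which amounts to structural induction on the shape of the evaluation context paired with a case split on the innermost redex. Lemma~\ref{lem:ctx-ext} (context extension) will handle the generic bookkeeping: every reduction rule extends $\Gamma$ to $(\Gamma,\Gamma'')$ and $\gamma$ to $(\gamma,\gamma'')$, and Lemma~\ref{lem:ctx-ext} guarantees that $s_\ixjname$ is insensitive to that extension, so we may compute denotations in whichever context is convenient.

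For the deterministic diagram the target is $\db\bA$ and the reduction is a measurable function, so I only need pointwise equality $\db t(\gamma)=\db{t'}(\gamma')$ for each rule in the definition of $\longrightarrow$ on $\ConPN\bA$. The rules for $\pi_j(v_0,v_1)$ and $\icase\,(i',v)\,\iof\,\ldots$ are immediate from the distributive-category semantics of products and sums. For $f(w)$, the rule explicitly imposes $f(\db w(\gamma))=\db v(\gamma')$, so soundness is a tautology; here Lemma~\ref{lemma:ultimate-patterns} is what makes the rule measurable and total as a function, but nothing further is needed for commutativity. For the two $\inorm$ rules, soundness is exactly our standing Assumption that $\onorm_\bA=\iota_{\db\bA}\circ s_\icjname$, together with the definition $\db{\inorm(t)}(\gamma)=\iota(\db t(\gamma))$. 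The context-closure rule $\conf{\ctx t}\gamma\longrightarrow\conf{\ctx{t'}}{\gamma'}$ is handled by induction and compositionality of $\db{-}$.

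For the probabilistic diagram, unfolding the definitions, what must be shown is that for every $C=\conf{\Gamma,t}\gamma\in\ConCN\bA$ and every measurable $V\subseteq\Rpz\times\db\bA$,
\[
  \db t(\gamma)(V)\;=\;\int_{\Rpz\times\ConC\bA}\!\!\db{t'}(\gamma')\bigl(\{(s,a)\mid(r\cdot s,a)\in V\}\bigr)\;\RedC{C}{\D(r,\conf{\Gamma',t'}{\gamma'})}.
\]
Case analysis on the innermost redex of $t$ proceeds as follows. The $\ilet\,x=\ireturn(v)\,\iin\,t$ case reduces with score~$1$ to $t[v/x]$ and matches the monad unit law for $\ireturn$ and substitution. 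The $\icase$ case on a tagged value is handled by the sum semantics. The $\iscore(v)$ case uses $\Score=\max(\db v(\gamma),0)$ and matches the definition $\db{\iscore(v)}(\gamma)(V)=[(\max(\db v(\gamma),0),*)\in V]$. Reductions under $\ectxc{-}$ (a nontrivial let-context) reduce to the previous cases via the monad multiplication law for $\ilet$. The reductions under $\ectxp{-}$ (a deterministic hole inside a probabilistic context) combine a deterministic step with the already-proven deterministic soundness.

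The main obstacle is the $\isample(v)$ case, because the one-step relation simultaneously (i)~partitions the outcome space by ordered values $v'$, (ii)~places the component of $\db v(\gamma)$ on each summand as the discrete part of $\RedCI{C}{-}$, and (iii)~uses conditional distributions as $\RedCC{C}{-}{-}$. To close this case I would invoke Lemma~\ref{lemma:ultimate-patterns} to exhibit $\db\bA\cong\sum_{v'}\db{\Gamma'}$ via the countable family of ordered values, apply Proposition~\ref{prop:srel-comb} to reconstitute $\db v(\gamma)$ from the discrete weights $\db v(\gamma)(\{\db{v'}(\gamma')\mid\gamma'\in\db{\Gamma'}\})$ and the normalized conditionals on each $\db{\Gamma'}$, and then observe that $\Score(\conf{\isample(v)}\gamma)=1$ so the score factor is trivial. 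The equality then reduces to $\db v(\gamma)(\{a\mid(1,a)\in V\})=\db{\isample(v)}(\gamma)(V)$, which is the definition. This disintegration step is essentially the content of Proposition~\ref{prop:srel-comb} and is the only place where a nontrivial measure-theoretic identity is used; everything else is routine structural induction.
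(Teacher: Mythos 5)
Your proposal is correct and follows essentially the same route as the paper's (very terse) proof: induction on the structure of the syntax, with the evaluation-context cases discharged via the context-extension Lemma~\ref{lem:ctx-ext}, whose applicability rests on the invariant of Lemma~\ref{lemma:type-pres}. Your detailed treatment of the $\isample$ case via Lemma~\ref{lemma:ultimate-patterns} and Proposition~\ref{prop:srel-comb} is exactly the right way to fill in the one genuinely non-routine step the paper leaves implicit.
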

\begin{proof}
  By induction on the structure of syntax.
  The inductive steps with evaluation contexts use the extension Lemma~\ref{lem:ctx-ext}, which applies by the invariant Lemma~\ref{lemma:type-pres}.
\end{proof}

\paragraph{Adequacy}

The denotational semantics is adequate, in the sense:
\[
        \db{t}(*) = \PM(\Rpz\times\semvalueconf)\big(\PrEval{\conf{\emptyset,t}{*}}{(-)}\big)\quad
        \text{for all $\icj{}{t}{\bA}$}.
\]
That is, the denotation $\db{t}(*)$ is nothing but pushing forward the probability measure $\PrEval{\conf{\emptyset,t}{*}}{(-)}$ from the operational semantics along the function $\semvalueconf$. This adequacy condition holds because 
Proposition~\ref{prop:fo:soundness} ensures that
\[
\bigl(\sum_{k \leq n}\PrEvaln{k}{\conf{\emptyset,t}{*}}{\{(r,C)~|~(r,\semvalueconf(C))\in U\}}\bigr) \leq \db{t}(*)(U)
\]
for all $n$ and $U$, and Proposition~\ref{prop:fo:termination} then guarantees that the left-hand side of this inequality converges to the right-hand side as $n$ tends to infinity.


\section{A higher-order language}
\label{sec:ho-lang}

This short section extends the first-order language with functions and thunks~\cite{levy:thunks}, allowing variables to stand for program fragments. In other words,  `programs are first-class citizens'.
%
%

\paragraph{Types}

Extend the grammar for types with two new constructors.
\begin{align*}
        \bA,\bB & {} \;::=\; 
        \R
        ~|~ \iPM(\bA)
        ~|~ 1
        ~|~ \bA \times \bB
        ~|~ \sum_{i\in I} \bA_i 
        ~|~ \bA\Rightarrow \bB~|~\iMonad(\bA)
\end{align*}
Informally, $\bA\Rightarrow \bB$ contains deterministic functions, and $\iMonad(\bA)$ contains thunked (\ie~suspended) probabilistic programs. Then ${A\Rightarrow \iMonad(\bB)}$ contains probabilistic functions. 
A type is \emph{measurable} if it does not involve $\Rightarrow$ or $\iMonad$, \ie~if it is in the grammar of Section~\ref{sec:fo-lang}. 

\paragraph{Terms}

Extend the term language with the following rules.
First, the usual abstraction and application of deterministic functions:
\begin{align*}
&
\inferback{
\ipj{ \Gamma,x\colon \bA} t \bB
}{
\ipj \Gamma {\lambda x.\,t} {\bA\Rightarrow \bB}
}
\qquad\qquad
\inferback{\ipj \Gamma t {\bA\Rightarrow \bB}
\quad
\ipj \Gamma u \bA
}{
\ipj \Gamma {t\, u} {\bB}
}
\\
\intertext{Second, we have syntax for thunking and forcing (\eg \cite{levy:thunks,moggi-monads,parkpfenningthrun:sampling}).}
&\inferback{
\icj \Gamma t \bA
}{
\ipj \Gamma {\ithunk(t)} {\iMonad(\bA)}
}
\qquad\qquad
\inferback{
\ipj \Gamma t {\iMonad(\bA)}
}{
\icj \Gamma {\iforce(t)} {\bA}
}
\end{align*}
All the rules from Section~\ref{sec:fo-lang} are also still in force. For rule~\eqref{eq:functionterms} to still make sense, we only include constant terms for measurable functions $f \colon \db\bA \to \db\bB$ between measurable types $\bA$ and $\bB$.

\paragraph{Examples}
One motivation for higher types is to support code
structuring. The separate function types and thunk types allow
us to be flexible about calling conventions.
For example, sampling can be reified as the ground term
\[
  \ipj{}{\lambda x.\,\ithunk(\isample(x))}{\iPM(\bA)\Rightarrow \iMonad(\bA)},
\]
which takes a probability measure and returns a suspended program that will sample from it.
On the other hand, to reify the normalization construction,
we use a different calling convention.
\[
  \ipj{}{\lambda x.\,\inorm(\iforce(x))}{\iMonad(\bA)\Rightarrow \R\times \iPM(\bA)+1+1}
\]
This function takes a suspended probabilistic program and returns the result of normalizing it.

\paragraph{Example: higher-order expectation}
Higher types also allow us to consider probability distributions
over programs. 
For an example, consider this term.
\[
E_h \;\;\defeq\;\;
\begin{array}[t]{@{}l@{}}
\lambda (d,f) \colon \iMonad(\bA) \times (\bA\Rightarrow \R).\,
\\
\qquad
\begin{array}[t]{@{}r@{}l@{}}
\icase\, & (\inorm(\ilet\, a = \iforce(d)\, \iin\, \iscore(f(a))))\ \iof
\\
   & (0,(e,y)) \Rightarrow e
\\
|\ & (1,*) \Rightarrow 0.0
\ \ |\  (2,*) \Rightarrow 0.0
\end{array}
\end{array}
\]
It has type $(\iMonad(\bA) \times (\bA \Rightarrow \R)) \Rightarrow \R$.
Intuitively, given a thunked probabilistic term $t$ and a function $f$ that is nonnegative, 
$E_h$ treats $t$ as a probability distribution on $\bA$, and computes the expectation 
of $f$ on this distribution. Notice that $\bA$ can be a higher type, so $E_h$ 
generalises the usual notion of expectation, which has not been defined for higher
types because the category of measurable spaces is not Cartesian closed.


\section{Higher-order operational semantics}
\label{sec:ho-opsem}

\newcommand*{\ixctx}[2]{#1\mathrel{\vdash\!\!\!\!_\mathsf{z}}#2}
\newcommand*{\ipctx}[2]{#1\mathrel{\vdash\!\!\!\!_\mathsf{d}}#2}
\newcommand*{\icctx}[2]{#1\mathrel{\vdash\!\!\!\!_\mathsf{p}}#2}
\newcommand{\LRedZ}[2]{\mathrm{R}(\ixctx{#1}{#2})}
\newcommand{\LRedZV}[2]{\mathrm{R}_\mathrm{v}(\ixctx{#1}{#2})}
\newcommand{\LRedP}[2]{\mathrm{R}(\ipctx{#1}{#2})}
\newcommand{\LRedPV}[2]{\mathrm{R}_\mathrm{v}(\ipctx{#1}{#2})}
\newcommand{\LRedC}[2]{\mathrm{R}(\icctx{#1}{#2})}
\newcommand{\LRedCV}[2]{\mathrm{R}_\mathrm{v}(\icctx{#1}{#2})}

In this section we consider operational semantics for the higher-order extension of the language.
In an operational intuition, $\iforce(t)$ forces a suspended computation $t$ to run.
For example, 
\[
  \ipj{}{\ithunk(\isample(\norDS(0.0,1.0)))}{\iMonad(\R)}
\] 
is a suspended computation that, when forced, will sample the normal distribution.

\hide{
It is often convenient for an anonymous function to suspend its body, 
so that applying it to an argument causes the body to evaluate.
This convention is encoded by the following derived constructs.
\[
  \inferback{
    \icj{ \Gamma,x\colon \bA} t \bB
  }{
    \ipj \Gamma {\lambda x.\,\ithunk(t)} {\bA\Rightarrow \iMonad(\bB)}
  }
  \quad
  \inferback{\ipj \Gamma t {\bA\Rightarrow \iMonad(\bB)}
  \quad
  \ipj \Gamma u \bA
  }{
    \icj \Gamma {\iforce(t\, u)} {\bB}
  }
\]
}

\begin{assumption}
\label{ass:no-P-ho}
From the operational perspective it is unclear how to deal with
sampling from a distribution over functions. For this reason, in this section, we
only allow the type $\iPM(\bA)$ when $\bA$ is a measurable type.
We still allow probabilistic terms to have higher-order types, 
and we still allow $\iMonad(\bA)$ where $\bA$ is higher-order.
\end{assumption}
\subsection{Reduction}
\label{sec:ho-opsem-red}

We now extend the operational semantics from Section~\ref{sec:fo-opsem} with higher types.
Values~\eqref{eqn:fo-value} are extended as follows.
\[ 
        v \,::=\,\dots~|~  \lambda x.t~|~\ithunk(t) 
\]
Evaluation contexts~\eqref{eqn:fo-ctx} are extended as follows.
\[
        \ctx{-} \,::=\, \dots      ~|~ \ctx{-}\,t~|~ v\,\ctx{-}
        \qquad
        \ectxp{-} \,::=\, \dots~ |~\ectxc{\iforce[-]}
\]
There are two additional redexes:
$(\lambda x.t)\,v$ and 
$\iforce(\ithunk(t))$.
The deterministic transition relation is extended with this $\beta$-rule:
\[
        \conf {\Gamma,(\lambda x.t)\,v}\gamma \longrightarrow \conf{\Gamma,t[v/x]}{\gamma}.
\]
Extend the probabilistic transition relation with the following rules.
\begin{align*}
&
\bigRedCI{\conf {\Gamma,\iforce(\ithunk(t))}\gamma}{\Gamma,t}=1
\\
&\bigRedCC{\conf {\Gamma,\iforce(\ithunk(t))}\gamma}{(\Gamma,t)}U=\indicator{\gamma\in U}
\\
&\Score({\conf {\Gamma,\iforce(\ithunk(t))}\gamma})=1
\end{align*}

\subsection{Termination}\label{subsec:ho-termination}

The evaluation relations for deterministic and probabilistic configurations of the higher-order language are defined as in Subsection~\ref{subsec:fo-termination}. 
The resulting rewriting system still terminates, even though configurations may now include higher-order terms.

\begin{proposition}[Termination]\label{prop:ho:termination} 
        Evaluation of deterministic terms is a function: $\forall C.\,\exists D.\,C\Downarrow D$. Evaluation of probabilistic terms is a stochastic relation: $\forall C.\,\PrEval C {(\Rpz\times \ConCV \bA)}=1$. 
\end{proposition}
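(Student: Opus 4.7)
The plan is to prove this by a Tait-style reducibility (logical relations) argument, extending the first-order termination result (Proposition~\ref{prop:fo:termination}). The technical difficulty is that simple structural induction on terms fails once we have $\lambda$-abstraction and $\iforce/\ithunk$: the $\beta$-rule $(\lambda x.t)\,v \longrightarrow t[v/x]$ can substitute a value inside an already complex subterm, so an inductive measure on raw term size need not decrease.

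I would proceed in three steps. First, define by induction on types a type-indexed family of reducibility predicates on well-typed terms-in-context. For each canonical $\Gamma$ and type $\bA$, introduce sets $R_v(\Gamma \vdashipj \bA)$ of reducible values and $R(\Gamma \vdashipj \bA)$ of reducible deterministic terms, and similarly $R_v(\Gamma \vdashicj \bA)$ and $R(\Gamma \vdashicj \bA)$ on the probabilistic side. At indecomposable types, reducible values are just the variables of that type in $\Gamma$; at sums and products, propagate componentwise; at $\iMonad(\bA)$, require $\ithunk(t)$ with $t \in R(\Gamma \vdashicj \bA)$; and at arrows, take the standard Kripke clause
\[
  R_v(\Gamma \vdashipj \bA\Rightarrow\bB) = \{\lambda x.t \mid \forall \Gamma' \supseteq \Gamma,\, \forall u \in R_v(\Gamma' \vdashipj \bA).\; t[u/x] \in R(\Gamma' \vdashipj \bB)\}.
\]
Lift reducibility from values to arbitrary terms operationally: $t \in R(\Gamma \vdashipj \bA)$ iff for every $\gamma \in \db{\Gamma}$, $\conf{\Gamma,t}\gamma$ evaluates to some value configuration $\conf{\Gamma',v}{\gamma'}$ with $v \in R_v(\Gamma' \vdashipj \bA)$; and $t \in R(\Gamma \vdashicj \bA)$ iff for every $\gamma$ the stochastic evaluation from $\conf{\Gamma,t}\gamma$ terminates with probability one in the union over $\Gamma' \supseteq \Gamma$ of configurations whose p-value part lies in $R_v(\Gamma' \vdashicj \bA)$.

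Second, prove the fundamental lemma by simultaneous induction on typing derivations: for any $\ipj{\Gamma,x_1{:}\bA_1,\ldots,x_n{:}\bA_n}{t}{\bB}$ and reducible substitution $v_i \in R_v(\Gamma \vdashipj \bA_i)$, the term $t[\vec v/\vec x]$ lies in $R(\Gamma \vdashipj \bB)$, and analogously in the probabilistic judgement. The first-order deterministic and sampling cases reduce to Proposition~\ref{prop:fo:termination}; the new cases $\lambda$-abstraction, application, $\ithunk$, and $\iforce$ unfold directly from the Kripke clause and from the clause for $\iMonad$. For $\ilet$, compose the sub-stochastic relations used in Section~\ref{subsec:fo-termination} via a Fubini-style argument to combine the two probability-one termination events. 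Specialising the fundamental lemma to closed terms then yields the proposition: the empty context trivially carries the empty reducible substitution, and reducibility at the empty context coincides with termination of the evaluation relation.

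The main obstacle will be making the Kripke clause interact correctly with the probability-one clause. After a $\isample$ step the context is extended by a fresh variable storing the sampled real (Lemma~\ref{lemma:type-pres}), and reducibility must be stable under such weakenings for the quantification $\forall \Gamma' \supseteq \Gamma$ to remain usable after each reduction step. I also need the measurability conditions from Lemma~\ref{lemma:ultimate-patterns} and Proposition~\ref{prop:srel-comb} to cooperate with the logical relation, so that the set of configurations whose p-value component is reducible is measurable and closed under the monadic bind underlying $\ilet$; only then can a.s.\ termination of $t$ and a.s.\ termination of $u$ combine to a.s.\ termination of $\ilet\, x = t\, \iin\, u$.
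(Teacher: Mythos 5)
Your proposal matches the paper's proof essentially exactly: the paper defines the same type-indexed unary logical relation (variables at indecomposable types, componentwise clauses at sums and products, $\ithunk(t)$ with $t$ reducible at $\iMonad(\bA)$, the Kripke clause quantifying over extended canonical contexts at arrow types, and the operational lifting requiring evaluation to a reducible value, resp.\ probability-one evaluation into reducible value configurations), and then establishes the same fundamental lemma by induction on term structure under reducible substitutions. Your closing remarks on stability under context extension and on measurability of the reducible-value configuration sets are exactly the details the paper's sketch leaves implicit, so they are a welcome elaboration rather than a divergence.
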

\begin{proof}
  We sketch an invariant of higher-order terms that implies the termination property, formulated as unary logical relations via sets 
  \begin{align*}
    \LRedZ \Gamma \bA&\subseteq\{t~|~\ixj\Gamma t \bA\}, \\
    \LRedZV \Gamma \bA&\subseteq\{t~|~\ixj\Gamma t \bA \wedge
    t\text{ a $\ixjname$-value}\},
  \end{align*}
  for each canonical context $\Gamma$, type $\bA$, and $\ixjname \in \{\ipjname,\icjname\}$, defined by:
  \begin{align*}
    \LRedP \Gamma \bA
    &= \{t \mid \forall \gamma.\,\conf {\Gamma,t}\gamma\Downarrow \conf{\Gamma',t'}{\gamma'}
      \wedge t'\in\LRedPV {\Gamma'} \bA\} \\
    \LRedC \Gamma \bA
    &= \{t \mid \forall \gamma.\,\Prob{\conf{\Gamma,t} \gamma \Downarrow 
      \\& \qquad\qquad\quad(\Rpz\times \textstyle{\sum\nolimits_{\Gamma'}}\LRedCV{\Gamma'} \bA\times \db{\Gamma'}})=1\} 
      \end{align*}\begin{align*}
    \LRedCV \Gamma \bA
    &=\{\ireturn(v) \mid v\in\LRedPV \Gamma \bA\} \\
    \LRedPV \Gamma {\bA}
    &=\{x \mid (x \colon \bA)\in\Gamma\} \qquad\text{for $\bA$ indecomposable} \\
    \LRedPV \Gamma{1} &=\{*\}\\
    \LRedPV \Gamma{\bA_1 \times \bA_2}
    &=\{(v_1,v_2) \mid \forall j.\,v_j\in\LRedPV \Gamma{\bA_j}\} 
    \\
    \LRedPV \Gamma{\sum \bA_i}
    &=\{(i,v) \mid v\in\LRedPV \Gamma{\bA_i}\} \\
    \LRedPV \Gamma{T(\bA)}
    &= \{{\ithunk(t)} \mid t\in\LRedC \Gamma \bA\}\\
    \LRedPV \Gamma{\bA\Rightarrow \bB}
    &= \{{\lambda x.t} \mid \forall \Gamma'\supseteq \Gamma, u\in\LRedPV {\Gamma'} \bA.\\
    & \phantom{=\{{\lambda x.t} \mid \forall}{t[u/x]}\in\LRedP {\Gamma'} \bB\}
  \end{align*}
  Induction on the structure of a term $\ixj{\Gamma,x_1\colon \bA_1,\dots x_n\colon \bA_n}t \bB$ for $\ixjname \in \{\ipjname,\icjname\}$ now proves that $v_i \in \LRedPV \Gamma{\bA_i}$ for $i=1,\ldots,n$ implies $t[\vec v/\vec x]\in\LRedZ \Gamma \bB$.
\end{proof}



\section{Higher-order denotational semantics}
\label{sec:ho-densem}

\newcommand{\cat}[1][C]{\mathbf{#1}}
\newcommand{\yoneda}{\mathbf{y}}
\newcommand{\kan}[1]{\overline{#1}}

This section gives denotational semantics for the higher-order language, without using Assumption~\ref{ass:no-P-ho}.
We are to interpret the new constructs $\iMonad(\bA)$, $\ithunk$, and $\iforce$.
We will interpret probabilistic judgements as Kleisli morphisms $\db\Gamma\to \Monad(\db A)$ for a certain monad~$\Monad$, and set $\db{\iMonad(A)}\defeq \Monad(\db A)$, so $\ithunk$ and $\iforce$ embody the correspondence of maps $\db\Gamma\to \Monad(\db A)$ and $\db\Gamma\to \db {\iMonad(A)}$. 

On which category can the monad $\Monad$ live?
Interpreting $\lambda$-abstraction and application needs a natural `currying' bijection between morphisms $\db\Gamma\times \R\to\R$ and morphisms $\db\Gamma\to \db{\R\Rightarrow\R}$.
But measurable functions cannot do this: it is known that no measurable space $\db{\R\Rightarrow\R}$ can support such a bijection~\cite{aumann:functionspaces}.

We resolve the problem of function spaces by embedding the category of measurable spaces in a larger one, where currying is possible, and that still has the structure to interpret the first order language as before.
As the larger category we will take a category of functors $\Meas\op\to\Set$ from a category $\Meas$ of measurable spaces and measurable functions to the category $\Set$ of sets and functions.
This idea arises from two traditions.
First, we can think of a variable of type $\R$ as a read-only memory cell, as in the operational semantics, and functor categories have long been used to model local memory~(\eg \cite{oles}).
Second, the standard construction for building a Cartesian closed category out of a distributive one is based on functor categories~(\eg \cite{power-genericmodels}).

\paragraph{Other models of higher-order programs}

Semantics of higher-order languages with \emph{discrete} probability are understood well. 
For terminating programs, there are set-theoretic models based on a distributions monad, and for full recursion one can use probabilistic powerdomains~\cite{jp-prob-powerdomain} or coherence spaces~\cite{etp-pcoh}.
It is also plausible one could model continuous distributions in domain theory, since it supports computable real analysis~(\eg~\cite{ee-integration}); this could be interesting because computability is subtle for probabilistic programming~(\eg \cite{afr-noncomputable}).
Nonetheless, we contend it is often helpful to abstract away computability issues when studying probabilistic programming languages, to have access to standard theorems of probability theory to justify program transformations.

\subsection{Semantic model}

Fix a category $\Meas$ of measurable spaces and measurable functions that is essentially small but large enough for the purposes of Section~\ref{sec:fo-densem}. For example, $\Meas$ could be the category of \emph{standard Borel spaces}~\cite{berberian:borel,srivastava:borel}: one can show that $\db \bA$ is standard Borel by induction on $\bA$, 
and the class of all standard Borel spaces is countable up to measurable isomorphism.

In Section~\ref{sec:fo-densem} we interpreted first-order types $\bA$ as measurable spaces $\db \bA$. 
We will interpret higher-order types~$\bA$ as functors $\dho\bA\colon \Meas\op\to \Set$.
Informally, when $\bA$ is a first-order type and $\Gamma$ is a first-order context, we will have
$\dho\bA(\db\Gamma)\cong \Meas(\db\Gamma,\db\bA) \approx\{t~|~\ipj\Gamma t \bA\}$.
For a second order type $(\bA\Rightarrow \bB)$, we will have
\[\dho{\bA\Rightarrow\bB}(\db\Gamma)\cong \Meas(\db\Gamma\times \db\bA,\db\bB) \approx\{t~|~\ipj{\Gamma,x:\bA} t \bB\}\]
so that $\beta/\eta$ equality is built in.
To put it another way, $\dho\bA(\R^n)$~models terms of type $\bA$ having $n$~read-only real-valued memory cells.


\begin{lemma}\label{lemma:orthogonality}
  For a small category $\cat$ with countable sums, consider the category $\kan{\cat}$ of countable-product-preserving functors $\cat\op\to\Set$, and natural transformations between them.
  \begin{itemize}
  \item $\kan\cat$ has all colimits;
  \item $\kan\cat$ is Cartesian closed if $\cat$ has products that distribute over sums;
  \item There is a full and faithful embedding $\yoneda:\cat\to\kan \cat$,
    given by $\yoneda(c)\defeq \cat(-,c)$,
    which preserves limits and countable sums.
  \end{itemize}
\end{lemma}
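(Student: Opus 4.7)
The plan is to view $\kan\cat$ as a reflective full subcategory of the presheaf category $[\cat\op,\Set]$. A pointwise limit of countable-product-preserving functors manifestly still preserves countable products, so $\kan\cat$ is closed under all limits in $[\cat\op,\Set]$; by an adjoint-functor-theorem argument the inclusion then has a left adjoint $L$. Most of the claims will follow from this reflection together with properties of $L$ that trace back to the distributivity hypothesis.

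For (1), since $[\cat\op,\Set]$ is cocomplete (colimits are pointwise in $\Set$) and $L$ preserves colimits as a left adjoint, $\kan\cat$ is cocomplete, with colimits computed by applying $L$ to the pointwise colimit in $[\cat\op,\Set]$. For (3), full faithfulness of $\yoneda$ is the usual Yoneda lemma, and preservation of limits follows because limits in $[\cat\op,\Set]$ are pointwise and each representable $\cat(d,-)\colon\cat\to\Set$ preserves limits. The embedding lands in $\kan\cat$ because the universal property of countable sums in $\cat$ yields $\cat(\sum_i c_i,-)\cong\prod_i\cat(c_i,-)$. Preservation of countable sums is a short Yoneda calculation:
\[
\textstyle\kan\cat(\yoneda(\sum_i c_i),F)\cong F(\sum_i c_i)\cong \prod_i F(c_i)\cong \prod_i \kan\cat(\yoneda(c_i),F),
\]
which exhibits $\yoneda(\sum_i c_i)$ as the coproduct of the $\yoneda(c_i)$ in $\kan\cat$.

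For (2), I would define exponentials via the usual presheaf formula $(G^F)(c)\defeq[\cat\op,\Set](\yoneda(c)\times F,G)$ for $F,G\in\kan\cat$, and verify that $G^F$ sends countable sums in $\cat$ to products in $\Set$. The strategy is to show that in $\kan\cat$ the object $\yoneda(\sum_i c_i)\times F$ is already the coproduct $\coprod_i(\yoneda(c_i)\times F)$, by combining three facts: $\yoneda$ sends countable sums to coproducts in $\kan\cat$ (from (3)); the reflector $L$ preserves finite products (this is the step for which the distributivity hypothesis is essential, ensuring that the localisation of $[\cat\op,\Set]$ cut out by countable-product-preservation is left exact); and $(-)\times F$ preserves colimits in $[\cat\op,\Set]$ as a left adjoint. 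Applying $\mathrm{Nat}(-,G)$ then converts the coproduct into the desired product, yielding $(G^F)(\sum_i c_i)\cong\prod_i(G^F)(c_i)$.

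The main obstacle is (2), specifically the closure of $\kan\cat$ under exponentials. The subtlety is that $[\cat\op,\Set]$ and $\kan\cat$ have different coproducts (pointwise in the former, via $L$ in the latter), so the identification of $\yoneda(\sum_i c_i)\times F$ with $\coprod_i(\yoneda(c_i)\times F)$ inside $\kan\cat$ is where the distributivity of products over countable sums in $\cat$ really pulls its weight; once that is in hand, everything else is formal manipulation.
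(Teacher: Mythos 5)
Your proposal is correct and follows essentially the same route as the sources the paper cites for this lemma (the paper's own ``proof'' is only a pointer to Power and Kelly): realise $\kan\cat$ as a reflective subcategory of $[\cat\op,\Set]$, get cocompleteness from the reflector $L$, get the properties of $\yoneda$ from the Yoneda lemma together with the universal property of sums, and get Cartesian closure from the fact that $L$ preserves finite products. The one place where you assert rather than argue is exactly where the real work sits: why distributivity makes $L$ preserve finite products. Calling this ``left exactness of the localisation'' also overstates both what is needed and what is available --- full left exactness would make $\kan\cat$ a topos, which does not follow from distributivity; preservation of finite products is weaker and suffices, by Day's reflection theorem. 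The missing verification is short and worth recording: $\kan\cat$ is the class of presheaves orthogonal to the maps $s_{(c_i)}\colon\coprod_i\yoneda(c_i)\to\yoneda(\sum_i c_i)$, and for $L$ to preserve finite products it suffices that this generating class be stable under $-\times\yoneda(d)$ up to maps inverted by $L$. But
\[
s_{(c_i)}\times\yoneda(d)\ \colon\ \coprod_i\yoneda(c_i\times d)\longrightarrow\yoneda\bigl(\bigl(\textstyle\sum_i c_i\bigr)\times d\bigr)
\]
is, modulo the distributivity isomorphism $(\sum_i c_i)\times d\cong\sum_i(c_i\times d)$, again a generating map $s_{(c_i\times d)}$; stability under $-\times F$ for arbitrary $F\in[\cat\op,\Set]$ then follows by writing $F$ as a colimit of representables and using that $-\times F$ preserves colimits and that the class of $L$-inverted maps is saturated. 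With that inserted, your identification of $\yoneda(\sum_i c_i)\times F$ with the $\kan\cat$-coproduct $\coprod_i(\yoneda(c_i)\times F)$, and hence the computation $(G^F)(\sum_i c_i)\cong\prod_i(G^F)(c_i)$, goes through as you describe.
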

\begin{proof}
See \eg\cite[\S7]{power-genericmodels}, or \cite[Theorems~5.56 and~6.25]{kelly-book}.
  The embedding $\yoneda$ is called the \emph{Yoneda embedding}.
\end{proof}

For a simple example, consider the category $\cat[CSet]$ of countable sets and functions. It has countable sums and finite products, but is not Cartesian closed. Because every countable set is a countable sum of singletons, the category $\kan{\cat[CSet]}$ is equivalent to $\Set$.

Our semantics for the higher-order language will take place in the category $\kan{\Meas}$. 
Note that products in $\kan \Meas$ are pointwise, \eg $(F\times G)(X)=F(X)\times G(X)$ for all $F,G\in\kan\Meas$ and all $X\in\Meas$, but sums are not pointwise, \eg $(1+ 1)\in\kan\Meas$ is the functor that assigns a measurable space $X$ to the set of its measurable subsets. This is essential for $\yoneda$ to preserve sums.

\paragraph{Distribution types}

We have to interpret distribution types $\iPM(\bA)$ in our functor category $\kan\Meas$. 
How can we interpret a probability distribution on the type $\R\Rightarrow\R$?
We can answer this pragmatically, without putting $\sigma$-algebra structure on the set of all functions. 
If $\db{\R\Rightarrow\R}$ were a measurable space, a random variable valued in $\db{\R\Rightarrow\R}$ would be given by a measurable space $(X,\Sigma_X)$, a probability distribution on it, and a measurable function $X \to \db{\R\Rightarrow\R}$. 
Despite there being no such measurable space $\db{\R\Rightarrow \R}$, we can speak of uncurried measurable functions $X \times \R\to \R$.
Thus we might define a probability distribution on $\db{\R\Rightarrow \R}$ to be a triple
\[
  \big((X,\Sigma_X),\, f\colon X\times \R\to \R,\, p\colon \Sigma_X\to[0,1]\big)
\]
of a measurable space $(X,\Sigma_X)$ of `codes', a measurable function $f$ where we think of $f(x,r)$ as `the function coded $x$ evaluated at $r$',  and a probability distribution $p$ on the codes. 
These triples should be considered modulo renaming the codes.
This is exactly the notion of probability distribution that arises in our functor category.

\begin{lemma}\label{lemma:kan}
  For a small category $\cat$ with countable sums:
  \begin{itemize}
    \item any functor $F\colon \cat\to\cat$ extends to a functor $\kan{F} \colon \kan{\cat} \to \kan{\cat}$ satisfying $\kan{F}\circ \yoneda \cong \yoneda \circ F$, given by
      \[  
        \kan{F}(G)(b) =
        \Big(\sum\nolimits_aG(a)\times \cat\big(b,F(a)\big)\Big) \;\slash\,\!\mathop{\sim}
      \]
	  where the equivalence relation $\sim$ is the least one satisfying ${(a',x,Fg\circ f)}\sim {(a,Gg(x),f)}$;
     \item similarly, any functor $F\colon \cat\times\cat\to\cat$ in two arguments
       extends to  a functor $\kan F\colon \kan\cat\times \kan\cat 
       \to \kan \cat$, with $\kan F\circ (\yoneda\times \yoneda)\cong\yoneda\circ F$:
       \[
       \kan{F}(G,H)(c)=
        \Big(\textstyle{\sum\nolimits_{a,b}G(a)\times H(b)\times\cat\big(c,F(a,b)\big)}\Big) \;\slash\,\!\mathop{\sim}
       \]
    \item any natural transformation $\alpha \colon F\to G$ between functors $F,G\colon\cat\to\cat$ lifts to a natural transformation $\kan{\alpha} \colon \kan{F} \to \kan{G}$, and similarly for functors $\cat \times \cat \to \cat$;
    \item and this is functorial, \ie
      $\kan{G \circ F} \cong \kan{G} \circ \kan{F}$ and 
      $\kan{\beta \circ \alpha} = \kan{\beta} \circ \kan{\alpha}$.
  \end{itemize}	
\end{lemma}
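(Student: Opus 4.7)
My plan is to recognise the stated formula as the coend
\[
  \kan{F}(G) \;\cong\; \int^{a} G(a) \times \cat(-,F(a)),
\]
i.e.\ the left Kan extension along $\yoneda$ of $\yoneda \circ F : \cat \to \kan{\cat}$. The conceptual backbone is that $\kan{\cat}$, as the free cocompletion from Lemma~\ref{lemma:orthogonality}, has the universal property that cocontinuous extensions of functors $\cat \to \kan{\cat}$ along $\yoneda$ exist and are essentially unique (cf.~\cite{power-genericmodels,kelly-book}). My strategy is to first construct $\kan{F}$ explicitly via the given quotient, then discharge each clause by combining this concrete handle with appeals to the universal property.

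The routine structure comes first. The action of $\kan{F}(G)$ on a morphism $g : c \to b$ of $\cat$ sends $[(a,x,f)] \mapsto [(a, x, f \circ g)]$, well-defined modulo $\sim$ by direct inspection. A morphism $\alpha: G \to G'$ in $\kan{\cat}$ induces $\kan{F}(\alpha)$ by $[(a,x,f)] \mapsto [(a, \alpha_a(x), f)]$, respecting $\sim$ by naturality of $\alpha$. The Yoneda compatibility $\kan{F} \circ \yoneda \cong \yoneda \circ F$ is a coYoneda calculation: taking $G = \yoneda(a')$, any triple $(a, h, f)$ with $h \in \cat(a,a')$ and $f \in \cat(b, F(a))$ is $\sim$-equivalent to $(a', \mathrm{id}_{a'}, Fh \circ f)$, yielding a natural bijection with $\cat(b, F(a'))$.

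The main obstacle is showing that $\kan{F}(G)$ actually lies in $\kan{\cat}$ --- i.e.\ preserves countable products in $\cat\op$ --- for arbitrary $G \in \kan{\cat}$, not just representables. I would handle this by invoking the standard fact that $\kan{\cat}$ is closed in $[\cat\op,\Set]$ under the pointwise coends of representables weighted by objects of $\kan{\cat}$; this is precisely what makes the formula factor through the reflective subcategory. The two-argument case is entirely analogous, iterating the coYoneda reduction separately in each variable to obtain $\kan{F}(\yoneda(a), \yoneda(b)) \cong \yoneda(F(a,b))$. A natural transformation $\alpha : F \to G$ lifts via $[(a, x, f)] \mapsto [(a, x, \alpha_a \circ f)]$, with naturality on both sides of the Yoneda embedding inherited from $\alpha$. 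Finally, the functoriality identities $\kan{G \circ F} \cong \kan{G} \circ \kan{F}$ and $\kan{\beta \circ \alpha} = \kan{\beta} \circ \kan{\alpha}$ follow cleanly from the universal property: both sides in each equation are cocontinuous extensions of the same composite along $\yoneda$, and therefore agree up to canonical isomorphism.
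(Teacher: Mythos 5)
Your proposal follows essentially the same route as the paper, whose entire proof consists of identifying $\kan{F}(G)$ as a left Kan extension (with a citation to Kelly) plus the remark that ``direct calculation shows $\kan F(G)$ preserves products if $G$ does''; your coend reading, the explicit actions on morphisms, and the coYoneda computation of $\kan{F}\circ\yoneda\cong\yoneda\circ F$ correctly supply details the paper leaves implicit.

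Two soft spots are worth naming, because they are exactly where the content of the lemma lives. First, the universal property you invoke --- precomposition with $\yoneda$ identifying cocontinuous functors out of $\kan{\cat}$ with functors out of $\cat$ --- applies only to functors $\cat\to\cE$ that \emph{preserve countable sums}. The lemma is stated for arbitrary $F$, and in the intended application $F=\PM$ does not preserve countable sums of measurable spaces, so $\yoneda\circ F$ falls outside that correspondence; existence and, in particular, the functoriality isomorphism $\kan{G\circ F}\cong\kan{G}\circ\kan{F}$ should therefore be derived from the formula itself (Fubini for coends together with coYoneda on the inner coend, $\int^b \cat(b,F(a))\times\cat(c,G(b))\cong \cat(c,G(F(a)))$) rather than from essential uniqueness of cocontinuous extensions. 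Second, the one genuinely nontrivial claim --- that $\int^a G(a)\times\cat(-,F(a))$ again sends countable sums to countable products when $G$ does, even though $F$ itself preserves nothing --- is asserted in your write-up as a ``standard fact'' in much the same way the paper asserts it as a ``direct calculation''; as stated your justification is close to circular. The actual verification uses that product-preservation of $G$ lets one tuple any countable family of elements $(a_i,x_i)$ of $G$ into a single element of $G\bigl(\sum_i a_i\bigr)$, which makes the colimit over the category of elements of $G$ commute with countable products in $\Set$. Neither point is a wrong turn, but both deserve to be argued rather than cited.
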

\begin{proof}
  $\kan F (G)$ is the left \emph{Kan extension} of $G$ along $F$, see \eg~\cite{kelly-book}.
  Direct calculation shows $\kan F(G)$ preserves products if $G$ does.
\end{proof}

Thus the commutative monads $\PM$ and $\Monad=\PM(\Rpz\times (-))$ on $\Meas$ lift to commutative monads $\kan \PM$ and $\kan \Monad \cong \kan\PM(\yoneda\Rpz\times (-))$ on $\kan\Meas$. 
The latter monad captures the informal importance-sampling semantics advocated by the designers of Anglican~\cite{wood-aistats-2014}.

\subsection{Conservativity}

We interpret the types of the higher order language as objects in $\kan \Meas$ using its categorical structure. 
\begin{align*}
  \textstyle{\dho {\sum_{i\in I} \bA_i}} & \defeq \textstyle{\sum_{i\in I}\dho{\bA_i} } &
  \dho \R & \defeq\yoneda\R &
  \dho {\iPM(\bA)} & \defeq\kan\PM(\dho \bA) \\
  \dho {\bA\times \bB} & \defeq \dho\bA\times \dho\bB &
  \dho {1} & \defeq 1 &
  \dho {\iMonad(\bA)} & \defeq\kan\Monad(\dho \bA) \\
  \dho{\bA\Rightarrow \bB} & \defeq \dho{\bA}\Rightarrow \dho{\bB} 
\end{align*}
We extend this interpretation to contexts.
\[
  \dho{x_1 \colon \bA_1,\ldots,x_n \colon \bA_n}
  \defeq
  \textstyle{\prod_{i=1}^n\dho{\bA_i}}
\]
Deterministic terms $\ipj{\Gamma}{t}{\bA}$ are interpreted as natural transformations 
$\dho\Gamma\to\dho \bA$ in $\kan\Meas$, and probabilistic terms ${\icj{\Gamma}{t}{\bA}}$ as natural transformations $\dho\Gamma\to\kan T\dho\bA$, by induction on the structure of terms as in Section~\ref{sec:fo-densem}. 
Application and $\lambda$-abstraction are interpreted as usual in Cartesian closed categories~\cite{pitts-catlogic}. 
Thunk and force are trivial from the perspective of the denotational semantics, because $\dho{\iMonad \bA}=\Monad\dho\bA$. 
To interpret $\inorm(t)$, use Lemma~\ref{lemma:kan} to extend the normalization functions $\Monad(X)\to \Rpz\times \PM(X)+1+1$ between measurable spaces~\eqref{eqn:iota} to natural transformations $\kan \Monad(F)\to \yoneda(\Rpz)\times \kan P(F)+1+1$. 

Note that all measurable types $\bA$ have a natural isomorphism $\dho\bA \cong \yoneda\db\bA$. 
This interpretation conserves the first-order semantics of Section~\ref{sec:fo-densem}:

\begin{proposition}
\label{prop:church-turing}
  For $\ixjname \in \{\ipjname,\icjname\}$, and first-order $\Gamma$ and $\bA$:
  \begin{itemize}
    \item  for first-order $\ixj \Gamma {t,u} \bA$, \ $\db t = \db u$ if and only if $\dho t = \dho u$;
    \item every term $\ixj \Gamma t \bA$ has $\dho t = \dho u$ for a first-order $\ixj \Gamma u \bA$.
  \end{itemize}
\end{proposition}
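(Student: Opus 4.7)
The plan is to reduce both statements to one structural fact: for every first-order type $\bA$ there is a natural isomorphism $\dho\bA\cong \yoneda\db\bA$, and hence also $\dho\Gamma\cong \yoneda\db\Gamma$ for every first-order context.

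First I would prove these isomorphisms by induction on $\bA$. The cases $\bA=\R$ and $\bA=1$ hold by definition; the cases $\bA=\bB_0\times \bB_1$ and $\bA=\sum_{i\in I}\bB_i$ use that the Yoneda embedding preserves products and countable sums (Lemma~\ref{lemma:orthogonality}); and the case $\bA=\iPM(\bB)$ uses Lemma~\ref{lemma:kan}, which gives $\kan\PM(\yoneda X)\cong\yoneda(\PM X)$. The same lemma yields $\kan\Monad(\yoneda X)\cong \yoneda(\Monad X)$ for $\Monad=\PM(\Rpz\times(-))$, which is needed for the probabilistic case.

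Next, a compatibility lemma, proved by induction on the structure of first-order terms, shows that under the isomorphisms above $\dho t$ is identified with $\yoneda(\db t)$ for deterministic $\ipj\Gamma t \bA$, and with $\yoneda$ applied to the Kleisli arrow $\db t\colon \db\Gamma\to\Monad\db\bA$ for probabilistic $\icj\Gamma t \bA$. The clauses for $\ireturn$, $\ilet$, $\isample$, $\iscore$ and $\inorm$ need that the lifted monad structure and the lifted normalization $\kan\iota$ on $\kan\Meas$ agree, along $\yoneda$, with the original structure on $\Meas$; this is a direct consequence of the functoriality of $\kan{(-)}$ and of $\kan F\circ\yoneda\cong \yoneda\circ F$ (Lemma~\ref{lemma:kan}). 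The first half of the proposition then follows from faithfulness of $\yoneda$: $\dho t = \dho u$ iff $\yoneda\db t = \yoneda\db u$ iff $\db t=\db u$.

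For the second half, take any (possibly higher-order) $\ixj \Gamma t \bA$ with $\Gamma$ and $\bA$ first-order. By the isomorphisms above, $\dho t$ corresponds to a morphism between representables in $\kan\Meas$, and by fullness of $\yoneda$ this is $\yoneda f$ for a unique measurable $f$. In the deterministic case, set $u\defeq f(x_1,\ldots,x_n)$ using rule~\eqref{eq:functionterms}; then $\db u = f$ and the compatibility lemma gives $\dho u = \dho t$. In the probabilistic case view $f\colon\db\Gamma\to\PM(\Rpz\times\db\bA)$ as a measurable function into $\PM(\R\times\db\bA)$, and take
\[
u\;\defeq\;\ilet\,z=\isample(f(x_1,\ldots,x_n))\,\iin\,\iscore(\pi_0 z);\,\ireturn(\pi_1 z).
\]
A short calculation with the clauses for $\ilet$, $\isample$, $\iscore$ and $\ireturn$ confirms $\db u = f$ (using $\max(r,0)=r$ on the support $\Rpz$), and then $\dho u = \dho t$ by the compatibility lemma.

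The principal obstacle is the compatibility lemma: checking that $\kan{(-)}$ carries the monads $\PM$, $\Monad$, their strengths, and the normalization transformation $\iota$ to their higher-order counterparts coherently with $\yoneda$. Once this bookkeeping is in place, the rest reduces to the representable case of the Yoneda lemma together with a single well-chosen first-order term representing each measurable function.
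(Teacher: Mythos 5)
Your proposal is correct and follows essentially the same route as the paper: establish that $\dho t$ coincides with the image of $\db t$ under the Yoneda embedding (modulo the natural isomorphisms $\dho\bA \cong \mathbf{y}\db\bA$ for first-order types) by induction on terms, then deduce the first bullet from faithfulness and the second from fullness together with rule~\eqref{eq:functionterms}. The paper leaves the probabilistic case of the second bullet implicit (``the other case is similar''), whereas you exhibit the witnessing first-order term explicitly via $\isample$, $\iscore$, and $\ireturn$; this is a welcome elaboration of the same argument rather than a different one.
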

\begin{proof}
  We treat $\ixjname=\ipjname$; the other case is similar.
  By induction on the structure of terms, $\dho t=\yoneda\db t \colon \dho \Gamma\to\dho \bA$.
  The first point follows from faithfulness of $\yoneda$;
  the second from fullness and~\eqref{eq:functionterms}.
\end{proof}

One interesting corollary is that the interpretation of a term of first-order type is always a measurable function, even if the term involves thunking, $\lambda$-abstraction and application. This corollary gives a partial answer to a question by Park \textit{et.\ al.}\ on the measurability of all $\lambda$-definable ground-type terms in probabilistic programs~\cite{parkpfenningthrun:sampling} (partial because our language does not include recursion).

\subsection{Soundness}
\label{sec:ho-sound}

The same recipe as in Section~\ref{sec:fo-opsem-sound} will show that the higher-order denotational semantics is sound and adequate with respect to the higher-order operational semantics. This needs Assumption~\ref{ass:no-P-ho}.

A subtle point is that configuration spaces~\eqref{eq:configurationspace} involve uncountable sums: the set of terms of a given type is uncountable, but $\yoneda$ only preserves countable sums. 
This is not really a problem because only countably many terms are reachable from a given program.
\newcommand{\reachp}{\rightsquigarrow^*_\ipjname}
\newcommand{\reachc}{\rightsquigarrow^*_\icjname}
\newcommand{\reachx}{\rightsquigarrow^*_\ixjname}
\newcommand{\TermUniverse}{\mathcal U}

\begin{definition}
  For a type $\bA$, the binary \emph{reachability} relation $\reachp$ on $\{(\Gamma,t) \mid \ipj\Gamma t \bA\wedge\Gamma\,\text{canonical}\}$ is the least reflexive and transitive relation with $(\Gamma,t)\reachp(\Gamma',u)$ if $\conf{\Gamma,t}\gamma\longrightarrow \conf{\Gamma',u}{\gamma'}$ for $\gamma\in\db\Gamma$, $\gamma'\in\db{\Gamma'}$.
  Similarly, $\reachc$ is the least reflexive and transitive relation on ${\{(\Gamma,t) \mid \icj\Gamma t \bA\wedge\Gamma\,\text{canonical}\}}$ with $(\Gamma,t)\reachc(\Gamma',u)$ if $\bigRedCI{\conf{\Gamma,t}\gamma}{\Gamma',u}\neq 0$ for $\gamma\in\db\Gamma$.
\end{definition}

\begin{proposition}
  Let $\ixjname \in \{\ipjname, \icjname\}$. 
  For any closed term $\ixj{}t\bA$, the set of reachable terms $\{(\Gamma,u)~|~(\emptyset,t)\reachx (\Gamma,u)\}$ is countable.
\end{proposition}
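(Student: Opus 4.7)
The plan is to deduce the countability of the reachable set from the fact that the underlying one-step relation on (context, term) pairs is countably branching, and then observe that the reflexive-transitive closure of a countably-branching relation from a single starting point is countable.

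First, I would show that for any $(\Gamma, t)$ with $\ixj{\Gamma}{t}{\bA}$, the collection of pairs $(\Gamma', u)$ for which there exist $\gamma \in \db{\Gamma}$ and $\gamma' \in \db{\Gamma'}$ making $\conf{\Gamma, t}{\gamma} \longrightarrow \conf{\Gamma', u}{\gamma'}$ (in the deterministic case) or $\bigRedCI{\conf{\Gamma, t}{\gamma}}{\Gamma', u} \neq 0$ (in the probabilistic case) is at most countable. This is a case analysis on the reduction rules of Sections~\ref{sec:fo-opsem} and~\ref{sec:ho-opsem-red}. Most rules yield a unique successor term and leave the context unchanged. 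The rules that enlarge the context are those for $f(w)$, $\isample(v)$, and $\inorm$. In the first two, the successor is indexed by an ordered value of the result type, and by Lemma~\ref{lemma:ultimate-patterns} there are only countably many such; the $\inorm$ rule produces at most three distinct successors depending on which branch of $\onorm$ applies. Contextual reductions inherit countable branching from their redexes, since the surrounding evaluation context is fixed. For probabilistic reduction this is the same countable-support observation already made in the proof that $\RedC{C}{-}$ is a stochastic relation.

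Next, I would iterate. Let $S_n$ denote the set of pairs reachable from $(\emptyset, t)$ in at most $n$ underlying steps. Then $S_0 = \{(\emptyset, t)\}$ is a singleton, and $S_{n+1}$ is the union, over elements of $S_n$, of their countable immediate-successor sets, hence countable by induction. The reachable set $\bigcup_{n \in \N} S_n$ is therefore countable.

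The hard part, such as it is, is confirming countable branching for the $\isample(v)$ and $f(w)$ rules, where syntactic successors are parametrised by ordered values in freshly introduced contexts; Lemma~\ref{lemma:ultimate-patterns} is the crucial ingredient, guaranteeing that every type, no matter how complex, supports only countably many ordered values. No additional complication arises from higher-order features: $\beta$-reduction and the $\iforce(\ithunk(-))$ rule are deterministic, yield a unique successor, and do not enlarge the context. Assumption~\ref{ass:no-P-ho} ensures that $\isample$ is only ever applied at a measurable type, so Lemma~\ref{lemma:ultimate-patterns} still applies in the higher-order setting.
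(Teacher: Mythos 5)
Your proposal is correct and follows essentially the same route as the paper: the crux in both is that one-step branching is countable, with Lemma~\ref{lemma:ultimate-patterns} supplying countability of ordered values for the $f(w)$ and $\isample(v)$ rules, and the closure is then a countable union of countable sets. The one difference is that the paper additionally invokes termination (Proposition~\ref{prop:ho:termination}) to present the reachable terms as a countably branching \emph{well-founded} tree, whereas your direct induction on the sets $S_n$ of terms reachable in at most $n$ steps shows that well-foundedness is not actually needed --- reachability is by definition a finite-chain closure, so $\bigcup_{n\in\N} S_n$ is countable regardless; your argument is thus marginally more self-contained. Your observation that Assumption~\ref{ass:no-P-ho} keeps $\isample$ at measurable types, so that the ordered-value lemma still applies in the higher-order setting, is exactly the right point to check.
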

\begin{proof}
  One-step reachability is countable by induction on terms.
  Since all programs terminate by Proposition~\ref{prop:ho:termination},
  the reachable terms form a countably branching well-founded tree.
\end{proof}

We may thus restrict to the configurations built from a countable set~$\TermUniverse$ of terms that is closed under subterms and reachability.
Extend the denotational semantics in $\kan\Meas$ to configurations by defining
${s_\ipjname \colon \yoneda(\ConP\bA)\to\dho\bA}$,
${s_\icjname \colon \yoneda(\ConC\bA)\to \kan\Monad\dho\bA}$, and 
${\semvalueconf \colon \yoneda(\ConCV\bA)\to \kan\Monad\dho\bA}$;
use the isomorphisms
\[\yoneda(\ConZ\bA)\cong
        \sum_{
                (\Gamma \vdash_{\!\!\mathsf{z}}\;t : \bA)\in\TermUniverse}
        \dho{\Gamma}
 \]
to define $s_\ipjname,s_\icjname,\semvalueconf$ by copairing the interpretation morphisms
$\dho {\ipj\Gamma t\bA} \colon \dho\Gamma\to\dho \bA$ and ${\dho {\icj\Gamma u\bA} \colon \dho\Gamma\to \kan T(\dho \bA)}$.

\begin{proposition}[Soundness]\label{prop:ho:soundness}
  The following diagrams commute.
  \[
  \xymatrix@R-6ex@C-2ex{
    \yoneda(\ConPN \bA)\ar[dr]^-{s_\ipjname}\ar[dd]_{\yoneda(\text{reduction})}\\
    &\dho \bA
    \\
    \yoneda(\ConP \bA)\ar[ur]_-{s_\ipjname} \\
  }\ \ 
    \xymatrix@R-6ex@C-2ex{
    \yoneda(\ConCN \bA)\ar[dr]^-{s_\icjname}\ar[dd]_-{\yoneda(\text{reduction})}
    \\
    &\kan T\dho \bA
    \\
    \kan T(\ConC \bA)\ar[ur]_-{\kan\mu\cdot\kan T(s_{\icjname})}
  }
  \]
\end{proposition}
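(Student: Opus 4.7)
The plan is to mimic the first-order soundness proof (Proposition~\ref{prop:fo:soundness}), but now in the Cartesian closed functor category $\kan\Meas$, by induction on the structure of the term inside the configuration. We split the argument in two: the deterministic diagram, which asserts equality of natural transformations $\yoneda(\ConPN\bA)\to \dho\bA$, and the probabilistic diagram, which asserts equality of Kleisli morphisms into $\kan\Monad\dho\bA$. Since both sides of each diagram are morphisms out of a sum indexed by the countable universe~$\TermUniverse$ (using the restriction explained just before the statement, so that $\yoneda$ preserves the indexing sum), it suffices to check equality summand by summand, i.e.\ for each judgement $\ixj\Gamma t\bA$ with $(\Gamma,t)\in\TermUniverse$.

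For the base redexes we compute both sides directly. The redexes inherited from Section~\ref{sec:fo-opsem} (projection, case, applied measurable function, $\inorm$, $\ilet\,x{=}\ireturn(v)\,\iin\,t$, $\iscore$, $\isample$) are handled exactly as in the first-order proof, using that $\dho{-}$ restricted to first-order contexts and types agrees with $\yoneda\db{-}$ by Proposition~\ref{prop:church-turing}, so the diagrams are obtained from the first-order diagrams of Proposition~\ref{prop:fo:soundness} by applying $\yoneda$. The two genuinely new redexes are $(\lambda x.t)\,v \longrightarrow t[v/x]$ and $\iforce(\ithunk(t))\longrightarrow t$. For the first, one needs a substitution lemma in $\kan\Meas$, namely $\dho{t[v/x]} = \dho t\circ\langle\identity,\dho v\rangle$, proved by a straightforward induction on $t$ using naturality and the universal property of exponentials; together with $\dho{(\lambda x.t)\,v}=\mathrm{ev}\circ\langle\Lambda(\dho t),\dho v\rangle$ this yields commutativity. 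For $\iforce(\ithunk(t))$, the denotation is literally equal to $\dho t$ because $\dho{\iMonad(\bA)}\defeq \kan\Monad\dho\bA$ makes $\ithunk$ and $\iforce$ identities denotationally, while operationally the step leaves the environment untouched and contributes score~$1$.

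For the inductive step we handle evaluation contexts $\ctx{-}$, $\ectxc{-}$, $\ectxp{-}$. We first establish the analogue of the context-extension Lemma~\ref{lem:ctx-ext} in $\kan\Meas$: if $\Gamma'=(\Gamma,\Gamma'')$, then $s_\ixjname$ on a $\Gamma'$-configuration factors through the projection $\dho{\Gamma'}\to\dho\Gamma$ composed with $s_\ixjname$ on the underlying $\Gamma$-configuration. Type preservation (Lemma~\ref{lemma:type-pres}) guarantees that reduction only extends the context, so we may apply this factorization. The commutativity for $\ctx{-}$ then reduces to the commutativity for its hole, using that $\dho{\ctx t}$ is built compositionally from $\dho t$; for $\ectxc{-}$ and $\ectxp{-}$ we additionally invoke the monad laws for $\kan\Monad$, in particular associativity $\kan\mu\circ\kan T(\kan\mu)=\kan\mu\circ \kan\mu$, to rewrite ``reduce in the hole, then continue'' as ``continue, with reduction applied in the hole''. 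This is where the Kleisli formulation pays off: the right-hand column $\kan\mu\cdot\kan T(s_\icjname)$ is precisely Kleisli extension of $s_\icjname$ along the stochastic reduction.

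The main obstacle I anticipate is bookkeeping for the probabilistic contexts $\ectxc{-}$, which involve the $\ilet$ binder, because the inductive hypothesis has to be composed with the Kleisli extension and with $\sigma_{\dho\Gamma,\dho\bA}$ in a way that respects the weighting by $\Score$ and the renaming of samples. Concretely, one has to check that the decomposition of $\RedC{\conf{\Gamma,\ectxc t}\gamma}{-}$ from Figure~\ref{fig:entities} into $(\RedCI{\cdots}{-},\RedCC{\cdots}{-}{-},\Score(\cdots))$ matches the monadic unfolding of $\db{\ilet\,x=t\,\iin\,u}$ on the semantic side; this is the content needed for the first-order proof as well, but in $\kan\Meas$ one must additionally use that $\kan\PM$ inherits commutativity from $\PM$ (Lemma~\ref{lemma:kan} and the fact that left Kan extensions along sum-preserving functors transport the monad structure), so that the interchange of independent samples used in the first-order reasoning still applies. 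Once this is set up, the remaining cases are routine.
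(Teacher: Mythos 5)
Your proposal is correct and follows essentially the same route as the paper, which itself only remarks that ``the same recipe as in Section~\ref{sec:fo-opsem-sound} will show'' the result: induction on term structure over the countable universe of reachable judgements, re-using the first-order redex cases via the agreement of the two semantics on measurable types (Proposition~\ref{prop:church-turing}), adding the $\beta$ and $\iforce(\ithunk(-))$ cases, and closing the evaluation-context cases with the context-extension and type-preservation lemmas together with the lifted monad structure. Your elaboration of the substitution lemma and the Kleisli bookkeeping is a faithful filling-in of what the paper leaves implicit.
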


\paragraph{Adequacy} 

It follows that the higher denotational semantics remains adequate, in the sense that for all probabilistic terms $\icj {}t \bA$, 
\[
  \dho{t}_1(*) = (\kan T(\semvalueconf))_1\big(\PrEval{\conf{\emptyset,t}{*}}{(-)}\big)\text.
\]
Adequacy is usually only stated at first-order types. 
At first-order types~$\bA$ the function $\semvalueconf$ does very little, 
since global elements of $\dho \bA$ correspond bijectively with value configurations modulo weakening, contraction and exchange in the context.
At higher types, the corollary still holds, but $\semvalueconf$ is not
so trivial because we do not reduce under $\ithunk$ or $\lambda$. (See
also~\cite{pp-adequacy}.)


\section{Continuous densities}
\label{sec:density}

In most examples, the argument $t$ to $\iscore(t)$ is a density function
for a probability distribution. When scores are based on density
functions, this makes the relationship
between $\iscore$ and $\isample$ tighter than we have expressed so far. 
Our language easily extends to accommodate such distributions. We just add a collection \emph{density types} to the syntax.
\[
      \bD \,::=\, \R \mid \iboolty \mid \N \mid 1 \mid \bD \times \bD
      \qquad\quad
      \bA \,::=\, \cdots \mid \density(\bD)
\]
The $\bD$ in this grammar denotes a measurable space that:
(i) carries a separable metrisable topology that generates the $\sigma$-algebra;
and that (ii) comes with a chosen $\sigma$-finite measure $\mu_\bD$.
An example is $\R$ with its usual Euclidean topology and the Lebesgue
measure (which maps each interval to its size).

The type $\density(\bD)$ denotes a measurable space $\db{\density(\bD)}$ of continuous functions $f \colon \db{\bD} \to \Rpz$ with $\int_X f\, \D \mu_\bD = 1$.
The $\sigma$-algebra of $\db{\density(\bD)}$ is the least one making $\{f \mid f(x) \leq r\}$ measurable for all $(x,r) \in \db{\bD} \times \Rpz$.

A density type $\density(\bD)$ comes with two measurable functions
\begin{align*}
  &\mathit{ev} \colon \db{\density(\bD)} \times \db{\bD} \to \Rpz
        &&
        \mathit{ev}(f,x) = f(x)\\
    &\mathit{dist}\colon \db{\density(\bD)}\to \db{\iPM(\bD)}
      &&\mathit{dist}(f)(U)=\int_Uf\,\D \mu_{\bD}
\end{align*}
Note that measurability of $\mathit{ev}$ relies on continuity of the densities~\cite{aumann:functionspaces}. 
The usual way of
imposing a soft constraint based on a likelihood can be encoded in our first-order language as
$\iscore(\mathit{ev}(f,x))$,
where the datum $x$ is observed with a probability distribution with density $f$. 
Thus the categorical machinery used to interpret higher-order functions is not needed for such soft constraints. 

There is a limit to this use of continuity: 
probability measures produced by $\inorm(t)$ need not have continuous density.
For example, $\inorm(\ireturn(42.0))$ produces a discontinuous Dirac measure.

Probability densities are often used by \emph{importance samplers}. 
The importance sampler generates samples of $f \in \db{\density(\R)}$ 
by first sampling from a proposal distribution $g$ where sampling is easy, and then
normalizing those samples $x$ from $g$ according to their
importance weight $f(x)/g(x)$. This works provided the support of $g$
is $\R$, e.g.~$g=\mathit{density}{\_}\norDS(x,(0.0,1.0))$.
\begin{align*} 
        &
        \db{\inorm(\isample(\mathit{dist}(f)))} =
        \\
        & 
        \left\llbracket
\inorm(\ilet\,x=\isample(\mathit{dist}(g))\,\iin\,\iscore(\textstyle{\frac{\mathit{ev}(f,x)}{\mathit{ev}(g,x)}});\ireturn(x))
        \right\rrbracket
\end{align*}

Density types can be incorporated into the higher order language straightforwardly. 
The only subtlety is that denotational semantics now needs the base category to contain $\db{\density(\bD)}$.

\section{Conclusion and future work}
\label{sec:future}

We have defined a metalanguage for higher-order probabilistic programs with continuous distributions and soft constraints, 
and presented operational and denotational semantics, 
together with useful program equations justified by the semantics. 
One interesting next step is to use these tools to study other old or new language features and concepts 
(such as recursion, function memoisation~\cite{Roy-pp-nonparametric2008},
measure-zero conditioning~\cite{borgstrometal:bayesiantransformer}, 
disintegration~\cite{Shan-disintegration2016,Ackerman-disintegration2015},
and exchangeability~\cite{FreerR12,Mansinghka-venture14,wood-aistats-2014}) 
that have been experimented with in the context of probabilistic programming. 
Another future direction is to formulate and prove
the correctness of inference algorithms, especially those based on
Monte Carlo simulation, following~\cite{HurNRS15}.

\acks

We thank T.\ Avery, I.\ Garnier, T.\ Le, K.\ Sturtz, and A.\ Westerbaan.
This work was suppoted by the EPSRC, a Royal Society University Fellowship,
An Institute for Information \& communications Technology Promotion (IITP) 
grant funded by the Korea government (MSIP, No. R0190-15-2011), 
DARPA PPAML, and
the ERC grant `causality and symmetry --- the next-generation semantics'.

\bibliographystyle{abbrvnat}
\bibliography{lics2016}








\end{document}